\ifdefined\pdfminorversion\pdfminorversion=7\fi
\documentclass[sigconf,nonacm]{acmart}
\usepackage{amsmath,booktabs}
\usepackage{algorithm,algpseudocode}
\usepackage{tikz}
\usetikzlibrary{arrows.meta}
\newcommand{\OPT}{\operatorname{OPT}}

\makeatletter
\renewcommand{\qed}{%
  \ifmmode \mathqed
  \else \leavevmode\unskip\nobreak\hfill\nobreak\hspace{.5em}\hbox{\qedsymbol}%
  \fi}
\makeatother

\AtBeginDocument{%
  }

\setcopyright{none}

\title{Truthful by Omission: Click Thinning for LLM-Native Advertising}

\author{Junhua Liu}
\affiliation{\institution{University of Southern California}\city{Los Angeles}\country{USA}}
\author{Zhicheng Liang}
\affiliation{\institution{The Chinese University of Hong Kong, Shenzhen}\city{Shenzhen}\country{China}}
\author{Zejun Zhang}
\affiliation{\institution{University of Southern California}\city{Los Angeles}\country{USA}}
\author{Langzhou He}
\affiliation{\institution{University of Illinois Chicago}\city{Chicago}\country{USA}}
\author{Feng Qian}
\affiliation{\institution{University of Southern California}\city{Los Angeles}\country{USA}}
\renewcommand{\shortauthors}{Liu et al.}

\hypersetup{pdfauthor={Junhua Liu, Zhicheng Liang, Zejun Zhang, Langzhou He, Feng Qian},pdfkeywords={Ad auctions, LLM-native advertising, cascade model, truthful mechanisms, click thinning, welfare approximation}}
\begin{document}

\begin{abstract}
Generative AI assistants are becoming an advertising medium, embedding sponsored content in their answers. This setting departs from classical position auctions in two ways: an ad's click-through rate depends on the passage that hosts it, and because users stop at their first click, earlier ads take clicks from later ones. This competition for clicks makes truthfulness delicate: a higher bid can reshape the allocation and put a competitor ahead of the bidder, reducing the bidder's clicks. Existing mechanisms either attain a constant fraction of the optimal welfare (advertisers' total click value) but are not known to be truthful, or are truthful but their welfare guarantee degrades as the number of passages grows. We achieve both even when the answer and its editorial order are fixed before bidding. Assuming each ad's click-through rate when reached is known and unaffected by omitting other ads, we give a computationally efficient mechanism that is truthful in expectation and attains at least half the optimal welfare under free reordering.

Our key idea, click thinning, omits ads at random, showing earlier ads less often so that each ad's overall click probability no longer depends on the order of the ads. Welfare maximization then reduces to a tractable matching problem, and solving it exactly with VCG-style payments makes truthful bidding optimal in expectation. Characterizing the click probabilities that omission can achieve sharpens the approximation factor from $2$ to $2-(1-\delta)^{K-1}$ when at most $K$ ads are shown and each click-through rate is at most $\delta$. This factor is the best possible for any order-preserving rule, even one that ignores incentives and computation, so truthfulness and computational efficiency incur no additional worst-case loss. 
\end{abstract}

\ccsdesc[500]{Theory of computation~Algorithmic game theory}
\keywords{Ad auctions, LLM-native advertising, cascade model, truthful mechanisms, click thinning, welfare approximation}
\maketitle

\section{Introduction}

Generative AI assistants are becoming an advertising medium, embedding sponsored content in their answers~\cite{zelch2024user,zhao2025llm}. Suppose a user asks an assistant to compare running shoes. The answer has several passages, such as a product comparison and a note on delivery, and each can host an ad. A shoe brand's ad fits the comparison better than the delivery note, so the platform must decide which advertiser appears in which passage~\cite{anchor}. Moreover, readers scan the answer from top to bottom and stop at their first click~\cite{craswell2008}, so an early ad takes clicks from every ad below it. LLM-native advertising thus extends position auctions~\cite{edelman2007} in two ways: an ad's click-through rate depends on its passage, and its overall click probability, which we call its \emph{click service}, depends on which ads precede it.

These interactions make truthfulness delicate. A truthful auction requires \emph{monotonicity} in click service, not merely in the chance of being shown: raising a bid must never lower the bidder's click service. In a bid-ranked position auction, a higher bid never moves an ad down. Here, it can change the whole matching and place a competitor ahead of the bidder. Balseiro et al.~\cite{anchor} formalize this problem: given public click-through rates for advertiser--passage pairs and an ad cap, the platform chooses a matching and a rendering order. For order-insensitive multinomial logit (MNL) choice, they compute optimal allocations exactly via linear-fractional matching. For the first-click cascade, they obtain either a $4/(1-\varepsilon)$ welfare approximation without a monotonicity guarantee or a randomized monotone $O(\log m)$-approximation for $m$ passages, and they leave open whether a constant approximation can also be monotone.

We study this question when the answer and the order of its passages are fixed before bidding. We call this order the \emph{editorial order}. In~\cite{anchor}, the LLM generates the answer in an order chosen from the bids. Fixing the order separates the content from bidding, as in designs that generate the organic answer before inserting ads~\cite{xu2026insertion}. The platform still chooses the matching and how often to show each ad, but it loses the lever that both cascade algorithms of~\cite{anchor} use to control competition for clicks: the constant-factor algorithm renders ads in decreasing order of value, and the monotone algorithm renders them in greedy selection order to secure monotonicity. Under editorial order, a high-value ad may sit behind competitors that absorb its clicks. Even the matching selected by our mechanism, if rendered in full, can let a higher bid bring a competitor ahead of the bidder and lower the bidder's click service (Proposition~\ref{prop:rawnonmonotone}).

Formally, advertisers have private nonnegative values per click, and each allowed pair $e$ of an advertiser and a passage, or \emph{context}, has a public \emph{standalone} click probability $p_e$: its click-through rate when a reader reaches it. Click trials are independent, at most $K$ ads may be shown, and omitting an ad leaves the answer and the other standalone probabilities unchanged (\emph{omission invariance}). This assumption is natural when ads are self-contained insertions that the renderer toggles without altering the surrounding text, and randomized omission makes it testable (Section~\ref{sec:model}). As in~\cite{anchor}, our benchmark $\mathrm{OPT}(v)$ is the maximum expected click value over feasible matchings and \emph{all} rendering orders.

We show that monotonicity and a constant approximation are compatible even under editorial order. Our computationally efficient mechanism is dominant-strategy truthful in expectation and attains at least $\mathrm{OPT}(v)/2$ at truthful reports (Theorem~\ref{thm:main}, Proposition~\ref{prop:monotone}). It charges only on clicks and keeps truthful utility nonnegative in every realization. Its outcomes are feasible in the model of~\cite{anchor}, so this answers their question, and the factor two also improves on their $4/(1-\varepsilon)$ guarantee, which needs reordering.

Our key idea, \emph{click thinning}, counters competition for clicks by deliberate omission. An early ad is reached by more readers, so we show it less often: before the user interacts, the platform keeps each selected ad independently with a probability computed from the standalone probabilities, smaller for earlier ads. With the right probabilities, every ad $e$ in a matching $M$ receives click service $p_e/(1+\sum_{f\in M}p_f)$, wherever it appears. For two ads with $p_e=1/2$, keeping the first with probability $1/2$ and the second with probability $2/3$ gives each a click probability of $1/4$ (Figure~\ref{fig:identity}). This order-invariant formula turns welfare into a ratio of linear functions of the matching, which can be maximized exactly in polynomial time. Because each matching's omission lottery is fixed by public data before bidding, exact maximization with pivot payments~\cite{clarke1971,groves1973,midr} charged only on clicks~\cite{lavi} is truthful in expectation.

\subsection{Contributions and Techniques}

\textbf{Click thinning (Section~\ref{sec:mechanism}).} Lemma~\ref{lem:service} derives the retention probabilities and the order-invariant click service, and every realized display is a feasible subset of the matching in editorial order. We maximize reported welfare exactly over this bid-independent range by linear-fractional matching, which keeps each advertiser's click service nondecreasing in its bid, even when its context, competitors, and retention probability all change. A threshold comparison~\cite{limits} bounds $\mathrm{OPT}(v)$ by twice the implemented welfare.

\textbf{Which click services omission can realize (Section~\ref{sec:odds}).} For a fixed matching and order, independent retention realizes every click-probability vector attainable by correlated omission (Lemma~\ref{lem:target}). For matchings with all $p_e<1$, the targets attainable in every order are given by per-ad weights bounded by the odds $p_e/(1-p_e)$ (Theorem~\ref{thm:invariant}). Optimizing over sub-matchings lets us use the odds weights without welfare loss, and thinning with these odds reproduces exactly the MNL click probabilities of~\cite{anchor}. Our odds mechanism combines these odds lotteries with unthinned certain-click singletons and maximizes welfare over all mixtures of individually order-invariant matching targets, with the same incentive and computational guarantees (Theorem~\ref{thm:complete}). It is welfare-optimal for $K=1$, and when every standalone probability is at most $\delta<1$, it attains at least $\mathrm{OPT}(v)/\alpha(K,\delta)$ with $\alpha(K,\delta)=2-(1-\delta)^{K-1}$.

\textbf{The cost of editorial order (Section~\ref{sec:editorial}).} For every $K\ge1$ and $0<\delta<1$, no order-preserving allocation rule guarantees a better worst-case approximation factor than $\alpha(K,\delta)$, even without incentive or computational restrictions, and the factor is tight as a supremum on disjoint edges (Theorem~\ref{thm:sharp}). Truthfulness and computational efficiency thus add no worst-case loss beyond fixing the order. The hard instances place a rare, high-value ad behind competitors that absorb its clicks; without a bound on standalone probabilities, two is the optimal uniform factor (Proposition~\ref{prop:tight}). In synthetic experiments, mean welfare relative to $\mathrm{OPT}(v)$ ranges from about 0.89 to 0.96 with multiple ads, and most of the gap comes from the order-invariant range rather than editorial order.

\begin{figure*}[t]
\centering
\includegraphics[ width=\textwidth]{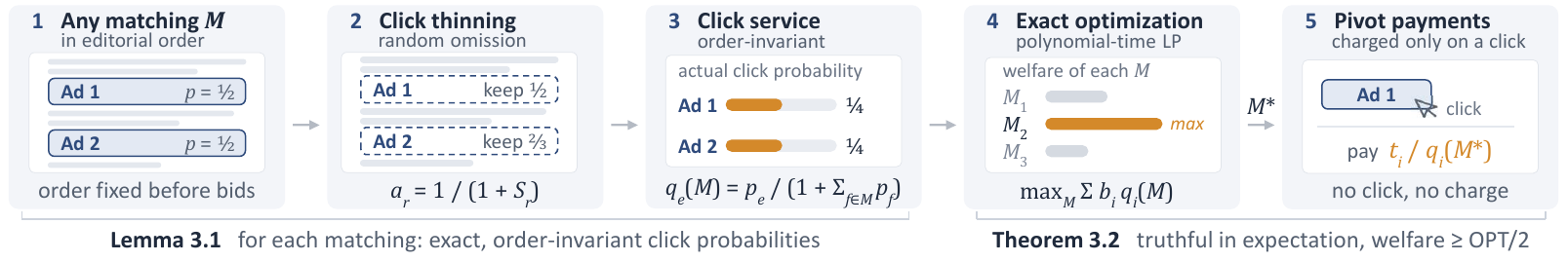}
 \vspace{-2em}
\caption{Structure of the basic mechanism. Click thinning realizes order-invariant click service. Exact range optimization with click-contingent pivot payments ensures truthfulness in expectation and welfare at least $\mathrm{OPT}(v)/2$ at truthful reports.}\label{fig:identity}
\end{figure*}

\subsection{Related Work}

\paragraph{LLM-native advertising.}
Balseiro et al.~\cite{anchor} solve MNL winner determination via the Charnes--Cooper transformation~\cite{charnes1962}, which we apply to thinned cascade service. Their cascade guarantees are Corollary~5.13 and Theorem~5.14. Xu et al.~\cite{xu2026insertion} also preserve the organic answer and its sentence order, using an additive genre-based advertiser--slot proxy with VCG and bounded-error incentives. In our model, service instead depends on competing ads through cascade reach. Other mechanisms use bid-conditioned response generation~\cite{zhao2025llm}, token or output distributions~\cite{dutting2024,soumalias2025}, LLM-generated summaries~\cite{dubey2024}, retrieved content~\cite{hajiaghayi2024}, quality-constrained allocations~\cite{han2026}, or endogenous ad content~\cite{wang2025content}.
Cascading bandits~\cite{kveton2015} learn attraction probabilities, which are public inputs here.

\paragraph{Cascade externalities.}
For ad-specific cascades, Kempe and Mahdian~\cite{kempe2008} and Aggarwal et al.~\cite{aggarwal2008} give exact algorithms and VCG mechanisms. Our free-order dynamic program (Appendix~\ref{app:dp}) extends theirs to context subsets. Farina and Gatti~\cite{farina} give a truthful public-order $1/2$ guarantee for context-independent quality and unit position-continuation factors. Their bound yields a subset of a free-order-optimal matching that keeps half its welfare in editorial order, and their analysis contains a related finite-product bound. Cloning advertisers across contexts to fit their model can select multiple copies of one advertiser,
violating feasibility (Appendix~\ref{app:boundary}). Our service range instead permits efficient truthful optimization over contextual matchings with the sharp cap--probability guarantee.

Gatti et al.~\cite[Proposition~5]{externalities} obtain a truthful $1/2$ guarantee in the ad--ad reset model by alternating occupied and empty slots. Empty contexts do not reset a first-click cascade, so this spacing construction does not transfer (Appendix~\ref{app:boundary}). Our retention probabilities instead compensate for differences in cascade reach.

\paragraph{Fractional optimization and truthful implementation.}
Gallego and Berbeglia~\cite{limits} connect fixed-preference independent consideration to cascades and develop auxiliary-MNL, threshold-excess, and attraction-monotonicity results. We apply their threshold comparison to feasible high-value subsets: thinning realizes the ratio as click welfare, and the no-click probability sharpens the bound. Gallego et al.~\cite{downward} study choice optimization under downward feasibility with totally unimodular constraints. Their last-choice bound supplies the relaxation behind our welfare certificates (Appendix~\ref{app:experiments}).

Computation uses integral matching polyhedra~\cite{schrijver2003}: our polynomial guarantee rests on a linear program, and our prototype follows Dinkelbach's ratio iteration~\cite{dinkelbach1967}. Exact optimization over public lotteries is maximal-in-distributional-range~\cite{midr}. Our contribution is the order-invariant range. Click-contingent pivots follow outcome-contingent payment realization~\cite{lavi}. We compute pivots explicitly, whereas implicit-payment transformations~\cite{bks} start from a monotone allocation. MacRury and Ma's inverse-safety attenuation~\cite{macrury} controls matching conflicts under online vertex arrivals, whereas thinning corrects for cascade reach within an already feasible matching (Appendix~\ref{app:boundary}). Adjacent models use position-only click-through rates in capacity-constrained auctions~\cite{capacity} or private ordinal preferences at fixed assortment prices~\cite{assortment}.

\section{Model}\label{sec:model}
We consider $n$ advertisers and $m$ contexts, each a passage of the answer with one insertion point. Let $E$ be the set of publicly allowed advertiser--context pairs, called edges, with public rational standalone probabilities $p_e\in[0,1]$. Feasible allocations are bipartite matchings $M\subseteq E$ with at most $K$ edges, for an integer ad cap $K\ge0$. Advertiser $i$ has a private value $v_i\ge0$ per click and reports a bid $b_i\ge0$. Rational inputs are assumed only for bit complexity. The incentive and approximation arguments hold for real values.

The editorial order of contexts is public and fixed before bidding. The platform renders each selected matching in the induced order and may omit sponsored creatives. Under a first-click cascade, a rendered ad on edge $e$ is clicked with probability $p_e$ conditional on no earlier click, and the user stops at the first click, which the platform observes for charging. Equivalently, with independent Bernoulli variables $Z_e$ of mean $p_e$, the user clicks the first rendered edge with $Z_e=1$. We assume that omitting other ads leaves the answer and $p_e$ unchanged (\emph{omission invariance}) and that an empty context causes no abandonment or reset. Both are natural when the renderer fixes the surrounding text, insertion positions, and styling before bidding and toggles only creatives. Standalone estimates do not establish this cascade model, but randomized omission can test it: systematic changes in retained-ad click rates conditional on reach, or exits before a click, would reveal violations.

\paragraph{Timing and information.}
The platform fixes contexts, allowed edges, probabilities, the cap, the editorial order, and the public tie rule before receiving reports. After the reports arrive, it computes the matching and payment parameters and then samples retention coins independently of Bernoulli click trials. It renders the retained ads in editorial order and charges the advertiser receiving the actual click, if any. The mechanism treats the public probabilities as correct inputs, and estimating them is a separate task.

An advertiser's utility is $v_i$ times the indicator of its actual click, minus its realized payment. \emph{Truthfulness in expectation} means that, for every fixed vector of others' reports, truthful reporting maximizes utility averaged over the platform's randomization and user clicks. \emph{Realized individual rationality} means that truthful utility is nonnegative in every realization.

The welfare objective is expected advertiser click value. We treat answer quality and total user--platform welfare as separate objectives. We compare our mechanism with the maximum expected welfare over the same feasible matchings when contexts may be freely reordered, denoted by $\OPT(v)$. This freely ordered optimum upper-bounds the best welfare under editorial order. Since welfare is linear in distributions over deterministic allocations, randomization cannot improve this optimum. For a fixed matching, decreasing-value order is optimal~\cite[Lemma~5.2]{anchor}: swapping adjacent values $u\ge v$ changes welfare by the nonnegative quantity $\text{reach}\cdot p_u p_v(u-v)$. Consequently, the freely ordered benchmark is
\begin{equation}\label{eq:benchmark}
\OPT(v)=\max_{M\text{ feasible}}\mathbb E\!\left[\max\bigl(\{v_iZ_e:e=(i,j)\in M\}\cup\{0\}\bigr)\right].
\end{equation}

\noindent\textit{Running example (Contextual auction).} Consider two advertisers, $A$ and $B$, and two contexts, $x$ and $y$. Both ads may be shown, so $K=2$. Their public probabilities and private values are
\[
\begin{array}{c|cc|c}
 &x&y&v_i\\\hline
A&4/5&1/5&4\\
B&3/5&9/10&3
\end{array}
\]
Contexts $x$ and $y$ can be read as two eligible passages of a generated answer. The two feasible two-ad matchings are the diagonal $\{Ax,By\}$ and the off-diagonal $\{Ay,Bx\}$. Selecting both edges incident to $x$, or both belonging to $A$, is infeasible: the auction must choose participants and contexts jointly. The freely ordered optimum renders $A$ before $B$, since $v_A>v_B$. The diagonal gives $4\cdot(4/5)+3\cdot(9/10)(1-4/5)=187/50$, whereas the reverse order gives $3\cdot(9/10)+4\cdot(4/5)(1-9/10)=151/50$. The off-diagonal in its better order gives $56/25$, and the best singleton gives $16/5$, so $\OPT(v)=187/50$. The diagonal's second advertiser has standalone probability $9/10$ but actual service only $9/50$ in the value-sorted rendering. Sections~\ref{sec:mechanism} and~\ref{sec:odds} revisit this instance to compare the mechanisms' click services and their welfare relative to $\OPT$.

\section{Click Thinning and the Basic Mechanism}\label{sec:mechanism}

We first show how randomized omission makes actual click welfare tractable to optimize without changing the editorial order. The basic construction gives click probabilities with a common denominator, so a single objective supports both truthful payments and a constant welfare guarantee. Figure~\ref{fig:identity} summarizes the construction.

\subsection{Exact Click Service}\label{sec:clickservice} Every matching inherits the editorial order, with public identifiers breaking any remaining ties. For $M=(e_1,\ldots,e_k)$, write $p_r=p_{e_r}$ and define
\begin{equation}\label{eq:retention}
S_r=\sum_{h=r}^{k}p_h,\quad a_r=\frac{1}{1+S_r}=\frac{1}{1+\sum_{h=r}^{k}p_h}.
\end{equation}
Put $S_{k+1}=0$. Retain $e_r$ independently with probability $a_r$ before the user interacts, and render the retained subsequence in editorial order. Let $D_M$ denote this lottery over rendered subsets, with user clicks governed by the cascade.

\begin{lemma}[Actual click service]\label{lem:service}
Under $D_M$, an edge $e\in M$ has actual click probability
\begin{equation}\label{eq:service}
q_e(M)=\frac{p_e}{1+S_1}=\frac{p_e}{1+\sum_{f\in M}p_f}.
\end{equation}
Every realization is a feasible subset of $M$, including when some probabilities are zero or one.
\end{lemma}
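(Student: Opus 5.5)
The plan is a direct computation of reach, followed by a telescoping identity. Fix $M=(e_1,\ldots,e_k)$ in editorial order. Under $D_M$, edge $e_r$ is clicked exactly when three events occur: it is retained, its Bernoulli trial succeeds, and no earlier retained edge was clicked. Retention coins are independent of each other and of the click trials. Omission invariance says that dropping other ads leaves every $p_h$ unchanged. Hence an earlier edge $e_h$ ($h<r$) independently "absorbs" the reader with probability $a_hp_h$. This gives
\[
q_{e_r}(M)=a_rp_r\prod_{h<r}(1-a_hp_h).
\]
The empty contexts left by omitted ads cause no abandonment or reset, so only retained-and-successful trials matter. The whole proof is therefore to evaluate this product.

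The key step is the identity $1-a_hp_h = 1-\frac{p_h}{1+S_h}=\frac{1+S_h-p_h}{1+S_h}=\frac{1+S_{h+1}}{1+S_h}$. The product then telescopes to $\prod_{h<r}(1-a_hp_h)=\frac{1+S_r}{1+S_1}$. Multiplying by $a_rp_r=p_r/(1+S_r)$ gives $q_{e_r}(M)=p_r/(1+S_1)$, which is \eqref{eq:service}. I would present this as an induction on $r$: the probability that the reader reaches position $r$ with no click equals $(1+S_r)/(1+S_1)$. The base case $r=1$ gives $1$, and each step multiplies by $(1+S_{r+1})/(1+S_r)$. The induction form makes the order-independence transparent. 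The final reach $(1+S_{k+1})/(1+S_1)=1/(1+S_1)$ is the no-click probability, so the services plus no-click sum to one, which serves as a consistency check.

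For the degenerate cases, note that $a_r\in(0,1]$ always, since $S_r\ge0$. So the retention probabilities are valid and no division by zero arises. If $p_h=0$, the factor is $1$ and $S_{h+1}=S_h$, consistent with the identity. If $p_h=1$, the factor is $1-1/(1+S_h)=S_h/(1+S_h)=(1+S_{h+1})/(1+S_h)$ with $S_{h+1}=S_h-1$. The identity is purely algebraic, so it needs no case split. The only point to check is that no intermediate step divides by $1-a_hp_h$, and the forward product above does not. Feasibility is immediate. Each realization renders a subset of $M$, and any subset of a matching with at most $K$ edges is itself a matching with at most $K$ edges, rendered in the inherited editorial order.

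I do not expect a real obstacle. The only delicate point is justifying the product formula for reach, namely that the events "$e_h$ retained and $Z_{e_h}=1$" are independent across $h$ and independent of $e_r$'s own coin and trial. This independence, together with omission invariance (the $p_h$ do not change with which other ads are shown), is what lets the cascade probability factor. I would state this explicitly before the telescoping step.
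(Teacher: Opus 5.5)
Your proposal is correct and follows the paper's proof essentially step for step. You obtain $q_{e_r}(M)=a_rp_r\prod_{h<r}(1-a_hp_h)$ from the independence of retention coins and click trials, telescope via $1-a_hp_h=(1+S_{h+1})/(1+S_h)$, and observe that the retention probabilities are valid with no division by $p_r$ and that every realization is a subset of $M$; your reach induction and explicit check of the $p_h\in\{0,1\}$ cases are harmless elaborations of the same argument.
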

\begin{proof}
Write $q_r=q_{e_r}(M)$. Each prefix factor satisfies
\[
1-a_hp_h=1-\frac{p_h}{1+S_h}=\frac{1+S_{h+1}}{1+S_h}.
\]
Independence of retention coins and Bernoulli click trials gives $q_r=a_rp_r\prod_{h<r}(1-a_hp_h)$. Substitution gives
\[
q_r=\frac{p_r}{1+S_r}\prod_{h=1}^{r-1}\frac{1+S_{h+1}}{1+S_h}=\frac{p_r}{1+S_1},
\]
proving~\eqref{eq:service}. Retention probabilities lie in $[0,1]$ and no division by $p_r$ occurs. Every outcome is a subset of the matching and obeys the cap. The empty matching supplies zero service. 
\end{proof}

\noindent\textit{Example (Two-ad thinning).} Take two disjoint matched edges with $p_1=p_2=1/2$ in editorial order $(1,2)$. The retention probabilities are $a_1=1/2$ and $a_2=2/3$. The user reaches the second edge with probability $1-a_1p_1=3/4$, so both actual services equal $1/4$. Appendix~\ref{app:boundary} enumerates the rendered subsets.

\subsection{Truthfulness and Welfare Guarantees}
The service identity turns reported click welfare into a linear-fractional matching objective. For advertiser $i$, let $q_i(M)$ be its edge service if matched and zero otherwise. Using a fixed public lexicographic tie rule, choose $M^*(b)\in\operatorname*{arg\,max}_M R_b(M)$, where
\begin{equation}\label{eq:ratio}
R_b(M)=\sum_{i=1}^{n}b_iq_i(M)
=\frac{\sum_{e=(i,j)\in M}b_ip_e}{1+\sum_{e\in M}p_e}.
\end{equation}
Write $M^*=M^*(b)$ and $M^{-i}=M^*(0,b_{-i})$, the optimizer with $i$'s report set to zero. The pivot quantities are
\begin{equation}\label{eq:payment}
\begin{aligned}
H_i(b_{-i})&=\max_M\sum_{h\ne i}b_hq_h(M)=\sum_{h\ne i}b_hq_h(M^{-i}),\\
t_i&=\sum_{h\ne i}b_h\bigl[q_h(M^{-i})-q_h(M^*)\bigr].
\end{aligned}
\end{equation}
Both maximizations use the same bid-independent lotteries. Parameters are computed before drawing $D_{M^*}$. With outcome-contingent payments~\cite{lavi}, advertiser $i$ pays only on a click, at price
\begin{equation}\label{eq:charge}
c_i=\begin{cases}\displaystyle\frac{t_i}{q_i(M^*)},&q_i(M^*)>0,\\[2pt]0,&q_i(M^*)=0.\end{cases}
\end{equation}
Section~\ref{sec:computation} gives an exact polynomial-time algorithm for selection and payment computation.

Exact optimization gives the incentive guarantee, while a threshold comparison bounds the welfare of the implemented service.
\begin{theorem}[Basic mechanism]\label{thm:main}
In the model of Section~\ref{sec:model}, the basic mechanism of Section~\ref{sec:mechanism} is feasible, truthful in expectation, and individually rational in every realization at truthful reports. It never transfers money to an advertiser or charges without an actual click. At truthful reports its expected welfare is at least $\OPT(v)/2$. Matching selection and all payment parameters are computable in deterministic time polynomial in the rational input size. Exact fair-bit sampling takes expected polynomial time.\looseness=-1\par
\end{theorem}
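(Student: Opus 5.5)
The plan is to verify the claims of Theorem~\ref{thm:main} in order: feasibility, payment sign and individual rationality, truthfulness, welfare, and computation. Everything rests on Lemma~\ref{lem:service}. For every matching $M$, the lottery $D_M$ depends only on public data, realizes only feasible subsets of $M$ in editorial order, and gives advertiser $i$ click probability $q_i(M)$. This settles feasibility at once.

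\emph{Truthfulness and payments.} The mechanism is exact maximization of reported welfare over the fixed, bid-independent range $\{D_M\}$, with Clarke pivots. The argument is therefore the standard maximal-in-distributional-range one. The expected payment equals $c_i q_i(M^*) = t_i$, and this also holds when $q_i(M^*)=0$: then $t_i=0$, because $M^*$ is also optimal when $b_i=0$ and the tie rule is fixed. Hence advertiser $i$'s expected utility at report $b_i$ is
\[
v_iq_i(M^*(b))+\sum_{h\ne i}b_hq_h(M^*(b))-H_i(b_{-i}).
\]
The last term does not depend on $b_i$. The first two terms are the objective $R$ evaluated at $(v_i,b_{-i})$ on the chosen matching, and they are maximized by reporting $v_i$.

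The payment is nonnegative, $t_i\ge0$, because $M^{-i}$ maximizes the others' welfare over the same range. Realized individual rationality requires $c_i\le v_i$ on a click, that is, $t_i\le v_iq_i(M^*)$. This follows from $R_v(M^*)\ge R_v(M^{-i})\ge \sum_{h\ne i}v_hq_h(M^{-i})$ at truthful reports. Charges occur only on clicks by construction of~\eqref{eq:charge}.

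\emph{Welfare.} I will use a threshold comparison. Let $R^*=R_v(M^*)$, which is the implemented expected welfare by Lemma~\ref{lem:service}. Let $M^\circ$ attain~\eqref{eq:benchmark}, and set $\tau=R^*$. Write $v_e$ for the value of the advertiser on edge $e$. Pointwise we have $\max(v_eZ_e,0)\le\tau+\sum_{e}(v_e-\tau)^+Z_e$, so
\[
\OPT(v)\le \tau+\sum_{e\in H}p_e(v_e-\tau),\qquad H=\{e\in M^\circ:v_e>\tau\}.
\]
The set $H$ is a sub-matching of a feasible matching, so it is feasible, and optimality gives $R^*\ge R_v(H)$. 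Rearranging $\sum_H v_ep_e\le R^*(1+\sum_Hp_e)$ yields $\sum_H p_e(v_e-R^*)\le R^*$. Hence $\OPT(v)\le 2R^*$.

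\emph{Computation.} This is the step I expect to need the most care. By Dinkelbach's parametric argument, $\max_M R_b(M)\ge\lambda$ holds if and only if
\[
\max_M\sum_{e\in M}(b_i-\lambda)p_e-\lambda\ge0 .
\]
The inner problem is a linear objective over the cardinality-capped bipartite matching polytope, which is integral as the intersection of two matroid polytopes. The same integrality lets the Charnes--Cooper linear program, with $y=x/(1+p^\top x)$ and $z=1/(1+p^\top x)$, return an integral optimal matching in polynomial time, via a vertex argument for linear-fractional objectives over integral polytopes. I would enforce the lexicographic tie rule by a polynomially encoded perturbation of the objective, or by sequentially fixing edges.

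Each $H_i$ is one more such solve with $b_i=0$, so all $t_i$ and $c_i$ are rational and polynomial-time computable. Each retention probability $a_r$ is rational, and each coin is sampled exactly by comparing fair bits to its binary expansion, which uses expected $O(1)$ bits per coin plus polynomial arithmetic.
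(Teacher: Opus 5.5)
Your proposal is correct and follows the paper's proof. It uses the same maximal-in-range pivot identity for truthfulness, the same threshold comparison on the high-value sub-matching $\{e\in M_0: v_i>r\}$ for the factor two, and the Charnes--Cooper LP for exact computation. The only differences are technical and all valid: you obtain integrality from matroid intersection rather than a flow network, recover an integral optimizer from an LP vertex (the Charnes--Cooper map preserves vertices) rather than by linearizing at the optimal ratio, bound $t_i\le v_iq_i(M^*)$ only at truthful reports rather than $c_i\le b_i$ for all reports, and sample coins by binary-expansion comparison rather than rejection sampling.
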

\begin{proof}[Proof of Theorem~\ref{thm:main}: incentives and payments]
Each $D_M$ is fixed by public data. With the other reports held fixed, the expected utility of type $v_i$ under any report equals
\begin{align*}
v_iq_i(M^*)-t_i
=v_iq_i(M^*)+\sum_{h\ne i}b_hq_h(M^*)-H_i.
\end{align*}
Truthful reporting selects a maximizer of the first two terms, while $H_i$ is independent of $i$'s report. This proves dominant-strategy truthfulness in expectation, including at ties, once the stated expected payments are implemented.

\begingroup\predisplaypenalty=0
By definition $H_i$ is at least the others' welfare under $M^*$, so $t_i\ge0$. Nonnegative bids imply that optimal total reported welfare is at least $H_i$. Consequently
\begin{equation}\label{eq:bounds}
0\le t_i=H_i-\sum_{h\ne i}b_hq_h(M^*)\le b_iq_i(M^*).
\end{equation}
\endgroup
If $q_i=0$, then $t_i=0$. Otherwise $0\le c_i\le b_i$, and click-contingent charging has expectation $t_i$. At $b_i=v_i$, a click gives utility $v_i-c_i\ge0$ and no click gives zero. Feasibility follows from Lemma~\ref{lem:service}. Section~\ref{sec:computation} proves the computational and sampling guarantees.
\end{proof}
\begin{proof}[Proof of the welfare bound]
Put $r=\max_M R_v(M)$. If $r=0$, every feasible singleton has $v_ip_e=0$, so $\OPT=0$. Otherwise fix a matching $M_0$ attaining the maximum in~\eqref{eq:benchmark}. Its subset $T=\{e=(i,j)\in M_0:v_i>r\}$ is feasible. Since $R_v(T)\le r$,
\begin{equation}\label{eq:excess}
\sum_{e\in T}p_e(v_i-r)=\sum_{e\in T}v_ip_e-r\sum_{e\in T}p_e\le r.
\end{equation}
For every Bernoulli realization,
\begin{equation}\label{eq:threshold}
\begin{aligned}
\max\bigl(\{v_iZ_e:e\in M_0\}\cup\{0\}\bigr)
&\le r+\sum_{e\in M_0}(v_i-r)_+Z_e.
\end{aligned}
\end{equation}
Taking expectations and applying~\eqref{eq:benchmark} and~\eqref{eq:excess} gives $\OPT\le2r$. Lemma~\ref{lem:service} identifies $r$ with the mechanism's actual expected welfare. This applies the established threshold comparison~\cite{limits} to the implemented service vector.
\end{proof}

Downward feasibility lets the welfare proof discard edges below the threshold before invoking ratio optimality, since otherwise negative terms could cancel the positive excess. Independence underlies the service identity and the representation of $\OPT$ in~\eqref{eq:benchmark}, while the threshold inequality itself holds pointwise.

\begin{proposition}[Monotonicity of actual click service]\label{prop:monotone}
Fix the other advertisers' reports. Under the basic mechanism, each advertiser's actual click probability is nondecreasing in its report. The same holds for exact maximization over the range $\mathcal R^{\mathrm{odds}}$ of Section~\ref{sec:odds}.
\end{proposition}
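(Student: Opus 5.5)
The plan is the standard cycle-monotonicity (exchange) argument for exact maximization over a bid-independent range. Fix $b_{-i}$ and two reports $b_i<b_i'$, and write $M=M^*(b_i,b_{-i})$ and $M'=M^*(b_i',b_{-i})$. By Lemma~\ref{lem:service}, each lottery $D_M$ is determined by public data, and the resulting service vector $q(M)$ does not depend on any bid. So the only bid-dependent part of $R_b(M)=\sum_h b_hq_h(M)$ is the vector of weights, and $R_b(M)$ is linear in $b_i$ with slope $q_i(M)$. Write $W_{-i}(M)=\sum_{h\ne i}b_hq_h(M)$, which does not depend on $b_i$.

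The key step is to add the two optimality inequalities. Optimality of $M$ at $b_i$ gives
\[
b_iq_i(M)+W_{-i}(M)\ge b_iq_i(M')+W_{-i}(M').
\]
Optimality of $M'$ at $b_i'$ gives
\[
b_i'q_i(M')+W_{-i}(M')\ge b_i'q_i(M)+W_{-i}(M).
\]
Summing, the $W_{-i}$ terms cancel and we obtain $(b_i'-b_i)\bigl(q_i(M')-q_i(M)\bigr)\ge0$, hence $q_i(M')\ge q_i(M)$. This uses no structure of the selected matching beyond exact maximization. In particular it does not matter that $i$'s context, its competitors, or its retention probability $a_r$ change between $M$ and $M'$: all of that is absorbed into the public number $q_i(\cdot)$. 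Ties need no separate treatment, since the argument holds for any maximizers, so the fixed lexicographic rule is harmless.

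For the range $\mathcal R^{\mathrm{odds}}$, I would run the same argument with matchings replaced by the elements of that range. These are mixtures of order-invariant odds lotteries and certain-click singletons, and each element induces a bid-independent service vector. Expected welfare is linear in the bids with coefficients equal to these services, because omission invariance and Lemma~\ref{lem:target} and Theorem~\ref{thm:invariant} fix the realized click probabilities from public data alone. Exact maximization over any fixed set of bid-independent service vectors is therefore monotone by the identical two-inequality exchange.

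The main point to verify is that the actual click probability delivered by the chosen element really equals the coefficient of $b_i$ in the objective being maximized. This means the realized lottery's service must match the formula used in the optimization, for example \eqref{eq:service} in the basic case. For the basic mechanism this is exactly Lemma~\ref{lem:service}. For $\mathcal R^{\mathrm{odds}}$, I would cite the realizability results of Section~\ref{sec:odds} to confirm that each target is implemented exactly in editorial order, rather than merely bounded.
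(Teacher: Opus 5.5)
Your proposal is correct and follows the paper's proof essentially verbatim: you add the two optimality inequalities over the fixed, bid-independent range so that the others' welfare cancels, giving $(b_i'-b_i)(q_i'-q_i)\ge0$, and you note that only exact maximization is used, so the argument transfers directly to $\mathcal R^{\mathrm{odds}}$. One small imprecision, which does not affect the argument: the elements of $\mathcal R^{\mathrm{odds}}$ are the empty outcome, odds lotteries on single feasible matchings in $E_{<}$, and unthinned certain-click singletons, not mixtures; mixtures enter only through the dominance statement of Theorem~\ref{thm:complete}.
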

\begin{proof}
Fix all reports except $i$'s, and compare reports $b_i<b_i'$. Let the two chosen lotteries have services $\boldsymbol{q}$ and $\boldsymbol{q}'$, and write $V_{-i}(\boldsymbol{q})=\sum_{h\ne i}b_hq_h$. Optimality at the two reports gives
\[
\begin{aligned}
b_iq_i+V_{-i}(\boldsymbol{q})&\ge b_iq'_i+V_{-i}(\boldsymbol{q}'),\\
b_i'q'_i+V_{-i}(\boldsymbol{q}')&\ge b_i'q_i+V_{-i}(\boldsymbol{q}).
\end{aligned}
\]
Adding yields $(b_i'-b_i)(q_i'-q_i)\ge0$: raising a report cannot decrease actual expected click service. This comparison lets the advertiser's context, competitors, and retention probabilities all change. Since it uses only exact maximization over a fixed range, it applies verbatim to $\mathcal R^{\mathrm{odds}}$.
\end{proof}

\noindent\textit{Running example (Basic mechanism).} It selects the diagonal, with click services $(8/27,1/3)$, click prices $(27/19,16/9)$, and welfare $59/27$, about $58\%$ of $\OPT(v)=187/50$. Appendix~\ref{app:running} gives the calculation and report-dependent allocations.

\subsection{Exact Computation}\label{sec:computation}
Selection and pivot payments require the same fractional matching optimization. We solve it for arbitrary finite public weights, allowing the click probabilities to be improved in Section~\ref{sec:odds} without changing the optimization algorithm.

The capped bipartite matching polytope $P$ consists of all $\boldsymbol{x}\ge0$ with advertiser row sums at most one, context column sums at most one, and $\sum_e x_e\le K$. It is integral, being the projection of an integral-capacity flow network with unit advertiser and context capacities and total flow at most $K$. Unused capacity is allowed.

\begin{lemma}[Exact linear-fractional matching]\label{lem:lfm}
For finite nonnegative rational weights $w$ and bids $b$, maximizing the ratio with numerator $\sum_{e=(i,j)}b_iw_ex_e$ and denominator $1+\sum_e w_ex_e$ over $P$ has the same optimal value as maximizing over capped matchings and as LP~\eqref{eq:lp}. Its value and lexicographically least maximizing matching are computable in time polynomial in the rational input size.
\end{lemma}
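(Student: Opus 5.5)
The plan is the Charnes--Cooper transformation, followed by a Dinkelbach-free argument that the optimum is attained at a vertex of an integral polytope. Write $N(\boldsymbol{x})=\sum_{e=(i,j)} b_iw_ex_e$ and $D(\boldsymbol{x})=1+\sum_e w_ex_e\ge 1$. Substituting $y=x/D(\boldsymbol{x})$ and $t=1/D(\boldsymbol{x})$ gives LP~\eqref{eq:lp}:
\[
\max \sum_{e=(i,j)} b_iw_ey_e \quad\text{subject to}\quad t+\sum_e w_ey_e=1,\ (y,t)\in\operatorname{cone}(P\times\{1\}),\ t\ge 0.
\]
Here the cone constraints are the homogenized matching constraints: row sums at most $t$, column sums at most $t$, and $\sum_e y_e\le Kt$. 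Since $D\ge1$, every feasible LP point has $t>0$. Indeed, if $t=0$, the homogenized constraints force $y=0$, which contradicts the normalization. So the map $(y,t)\mapsto y/t$ is a bijection onto $P$ that preserves the objective value. This shows that the ratio over $P$ equals the LP value.

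The next step is to show that the ratio over $P$ equals the ratio over integral capped matchings. The ratio is quasi-linear, so its level sets are half-spaces. Concretely, for any $\lambda$, the condition $N(\boldsymbol{x})\ge\lambda D(\boldsymbol{x})$ is linear in $\boldsymbol{x}$. Let $\lambda^*$ be the supremum of the ratio over $P$. It is attained because $P$ is compact and $D$ is continuous and positive. The function $\boldsymbol{x}\mapsto N(\boldsymbol{x})-\lambda^*D(\boldsymbol{x})$ is linear with maximum $0$ over $P$, so it attains that maximum at a vertex of $P$. By the stated integrality, that vertex is a capped matching, and it therefore attains ratio $\lambda^*$. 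Hence the three optimal values coincide. We can compute $\lambda^*$ as a rational number of polynomial size by solving the LP with a polynomial-time LP method, namely the ellipsoid or interior-point method, since the constraint matrix has polynomial encoding.

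Finally comes the lexicographic tie rule, which I expect to be the main obstacle. Once $\lambda^*$ is known, the maximizing matchings are exactly the integral points of the face $F=\{\boldsymbol{x}\in P: N(\boldsymbol{x})-\lambda^*D(\boldsymbol{x})=0\}$. This face is itself an integral polytope, because it is a face of $P$. To extract the lexicographically least maximizing matching, process the edges in the public order. For each edge $e$ in turn, fix $x_e=0$ if some point of the current face still achieves value $0$ with $x_e=0$, and otherwise fix $x_e=1$. Fixing coordinates to $0$ or $1$ yields a face of a face, so it stays integral. Each test is therefore a single LP feasibility check, or equivalently an LP maximization of the linear function over $P$ with some coordinates fixed. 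Making $x_e$ zero whenever possible produces the lexicographically least maximizer under the order that ranks sets by their earliest edges.

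If the paper's lex order is defined differently, for example as least under an edge-indicator vector ordering, I would adapt the greedy test accordingly. This is the step where the exact convention matters. The procedure uses at most $|E|$ LP calls, each of polynomial size, which gives the polynomial bound.
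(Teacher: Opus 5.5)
Your proof is correct and follows the paper's route. Both use the Charnes--Cooper change of variables to LP~\eqref{eq:lp}, the integrality of $P$, and coordinate-by-coordinate fixing on the optimal face. Preferring $x_e=0$ already gives the lexicographically least incidence vector that the paper uses (Appendix~\ref{app:recovery}), so your greedy test needs no adaptation.

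The one difference is how you prove that the relaxation is exact. You use the parametric face $N-\lambda^*D=0$. The paper instead reweights a convex decomposition $\boldsymbol{x}=\sum_M\lambda_M\mathbf{1}_M$ by $\mu_M\propto\lambda_M\bigl(1+\sum_{e\in M}w_e\bigr)$, so that the ratio at $\boldsymbol{x}$ becomes a convex combination of matching ratios. It uses your parametric argument only afterwards, for recovering the matching.
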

\begin{proof}
If $\boldsymbol{x}=\sum_M\lambda_M\mathbf1_M$ is a convex combination of matching incidence vectors, set $\mu_M=\lambda_M(1+\sum_{e\in M}w_e)/(1+\sum_{e\in E}w_ex_e)$. These coefficients are nonnegative and sum to one, and
\[
\frac{\sum_{e=(i,j)\in E}b_iw_ex_e}{1+\sum_{e\in E}w_ex_e}
=\sum_M\mu_M\frac{\sum_{e=(i,j)\in M}b_iw_e}{1+\sum_{e\in M}w_e}.
\]
The relaxation therefore preserves the optimal value: some integral matching attains the maximum over $P$.

Set $z=(1+\sum_e w_ex_e)^{-1}$ and $\boldsymbol{y}=z\boldsymbol{x}$. The Charnes--Cooper transformation, as used in~\cite{anchor}, yields the exact LP
\begin{equation}\label{eq:lp}
\begin{alignedat}{2}
\max_{\boldsymbol{y},z\ge0}\quad &\sum_{e=(i,j)\in E}b_iw_ey_e\\
\text{subject to}\quad
&\sum_{j:(i,j)\in E}y_{ij}\le z &\qquad &(i=1,\ldots,n),\\
&\sum_{i:(i,j)\in E}y_{ij}\le z &&(j=1,\ldots,m),\\
&\rlap{$\displaystyle\sum_{e\in E}y_e\le Kz,\quad z+\sum_{e\in E}w_ey_e=1.$}
\end{alignedat}
\end{equation}
For $K>0$ and nonempty $E$, $z\ge1/(1+K\max_e w_e)>0$. Zero-size instances are trivial. The change of variables is reversible. Exact rational LP algorithms compute its optimum $r$ with polynomial bit complexity.

The optimal ratio also lets us recover an integral optimizer. To recover a matching, maximize $\sum_{e=(i,j)}(b_i-r)w_ex_e$ over $P$. The definition of $r$ implies that this maximum is $r$, and every integral maximizer has ratio $r$. Standard capped weighted matching with successive exact edge-fixing queries selects the lexicographically least maximizing matching in polynomial time. Appendix~\ref{app:recovery} gives the exact recovery procedure.
\end{proof}

Setting $w=p$ gives the mechanism of Theorem~\ref{thm:main}. Only advertisers receiving positive service need separate pivot computations. If $q_i(M^*)=0$, the bound~\eqref{eq:bounds} forces $t_i=0$: $M^*$ already gives the others the full optimal reported welfare, so $H_i=R_b(M^*)$. Thus one allocation optimization and at most $K$ pivot-value optimizations suffice. Appendix~\ref{app:sampling} details encoding and exact fair-bit sampling.

\section{Characterizing and Optimizing Click Service}\label{sec:odds}
The basic mechanism uses one click-service target per matching. We characterize all attainable targets, then restrict to those implementable in every order. Their fractional representation enables exact polynomial-time optimization over contextual matchings. Optimizing over all fixed-order targets and matchings instead gives the fixed-order optimum, whose complexity for general contexts we leave open (Appendix~\ref{app:augmentation}).

\subsection{The Attainable Service Region}
A click can occur only if the user reaches the ad. We show that this reach constraint completely determines which click probabilities omission can realize. Fix an ordered matching $(e_1,\ldots,e_k)$ and write $p_r=p_{e_r}$. A preinteraction omission lottery may correlate its retention indicators but is independent of the Bernoulli click trials. Let $q_r$ denote the probability that the first click occurs on $e_r$.

\begin{lemma}[Fixed-order service region]\label{lem:target}
For a nonnegative vector $\boldsymbol{q}$, put $\rho_r=1-\sum_{h<r}q_h$. It is attainable by such a lottery if and only if
\begin{equation}\label{eq:region}
0\le q_r\le p_r\rho_r=p_r\left(1-\sum_{h=1}^{r-1}q_h\right),\quad r=1,\ldots,k.
\end{equation}
Whenever it is attainable, independent retention implements it. When $p_r\rho_r>0$, the unique independent retention probability for $e_r$ is $a_r=q_r/(p_r\rho_r)$. When $p_r\rho_r=0$, the target is zero and any retention probability is valid.
\end{lemma}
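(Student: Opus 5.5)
The proof has two directions: necessity for arbitrary correlated lotteries, then sufficiency by an explicit independent construction, with uniqueness read off from the same computation.

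\textbf{Necessity.} Let $R_r\in\{0,1\}$ be the retention indicator of $e_r$ under an arbitrary preinteraction lottery, and let $N_r$ be the event that no click occurs before $e_r$. I will not factor $\Pr(N_r)$ into per-ad terms, since the $R_h$ may be correlated. Instead I use the identity $\Pr(N_r)=1-\sum_{h<r}q_h=\rho_r$, which holds because the first-click events on $e_1,\dots,e_{r-1}$ are disjoint and their union is the complement of $N_r$. A first click on $e_r$ requires $N_r$, $R_r=1$ and $Z_{e_r}=1$. Moreover, $N_r$ and $R_r$ are determined by the retention indicators together with $Z_{e_1},\dots,Z_{e_{r-1}}$, all of which are independent of $Z_{e_r}$. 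Therefore
\[
q_r=p_r\Pr(N_r\cap\{R_r=1\})\le p_r\Pr(N_r)=p_r\rho_r,
\]
and nonnegativity is trivial. This gives~\eqref{eq:region}.

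\textbf{Sufficiency and uniqueness.} Under independent retention with probabilities $a_h$, induction on $r$ gives
\[
\Pr(N_r)=\prod_{h<r}(1-a_hp_h)\quad\text{and}\quad q_r=a_rp_r\Pr(N_r),
\]
which is the same computation as in Lemma~\ref{lem:service}. Since $\Pr(N_r)=\rho_r$ is fixed by the earlier targets, I proceed sequentially. Given a target satisfying~\eqref{eq:region}, suppose $a_1,\dots,a_{r-1}$ have been chosen to realize $q_1,\dots,q_{r-1}$. Then the reach of $e_r$ equals $\rho_r$.
\begin{itemize}
\item If $p_r\rho_r>0$, the equation $q_r=a_rp_r\rho_r$ forces $a_r=q_r/(p_r\rho_r)$, and~\eqref{eq:region} puts this value in $[0,1]$. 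Uniqueness follows, because any independent implementation of the same target also has reach $\rho_r$ at step $r$, by the identity used in the necessity part.
\item If $p_r\rho_r=0$, then~\eqref{eq:region} forces $q_r=0$, and every $a_r$ yields $q_r=0$.
\end{itemize}

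\textbf{Main obstacle.} The only delicate point is the necessity step under correlation. One must avoid assuming that reach factorizes and instead use $\Pr(N_r)=\rho_r$ directly, together with independence of $Z_{e_r}$ from everything that precedes $e_r$. Everything else is routine induction.
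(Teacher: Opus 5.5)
Your proposal is correct and follows essentially the same route as the paper. Necessity comes from $q_r=p_r\Pr(\text{retained, no earlier click})\le p_r\rho_r$, using disjointness of the earlier first-click events. Sufficiency comes from the sequential choice $a_r=q_r/(p_r\rho_r)$, with an induction showing that reach equals $\rho_r$. Where the paper leaves uniqueness implicit, you spell it out.
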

\begin{proof}
For arbitrary retention indicators $D$, independence from Bernoulli click trials gives
\[
q_r=p_r\Pr(D_r=1,\text{ no earlier click})\le p_r\rho_r.
\]
Earlier first-click events are disjoint, so their total probability is $1-\rho_r$. Conversely, inequalities~\eqref{eq:region} inductively keep every prefix sum in $[0,1]$. Set $a_r=q_r/(p_r\rho_r)$ wherever the denominator is positive and zero otherwise. Suppose the first $r-1$ targets have been realized, so that $e_r$ is reached with probability exactly $\rho_r$. The current retention coin and click trial are independent of these earlier events, giving actual service $a_rp_r\rho_r=q_r$. The next reach probability is therefore $\rho_{r+1}=\rho_r-q_r$, completing the induction from $\rho_1=1$. If reach becomes zero, all later targets are zero. The construction also covers total service one.
\end{proof}

For $q_r=p_r/(1+S_1)$, reach is $(1+S_r)/(1+S_1)$, recovering the suffix rule~\eqref{eq:retention}. The reach constraint also identifies targets that can be implemented regardless of the imposed editorial order.

\begin{theorem}[Targets attainable in every order]\label{thm:invariant}
For a fixed matching $M$, a nonnegative vector $\boldsymbol{q}$ with $q_0=1-\sum_{e\in M}q_e\ge0$ is attainable in every order, allowing a different preinteraction lottery for each order, exactly when the equivalent inequalities hold:
\begin{equation}\label{eq:invariant}
q_e\le p_e(q_0+q_e)\quad\Longleftrightarrow\quad(1-p_e)q_e\le p_eq_0,\qquad e\in M.
\end{equation}
For $q_0>0$, these targets are exactly
\[
q_0=\frac{1}{1+\sum_{f\in M}w_f},\qquad
q_e=w_eq_0=\frac{w_e}{1+\sum_{f\in M}w_f},
\]
where $w_e\ge0$ and $(1-p_e)w_e\le p_e$ for every $e\in M$.
For $q_0=0$, they are precisely the probability vectors supported on certain-click edges $p_e=1$.
\end{theorem}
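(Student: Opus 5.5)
The plan is to reduce everything to Lemma~\ref{lem:target}, which says a vector is attainable in a fixed order exactly when $q_r\le p_r\rho_r$ for every position $r$. Attainability in every order is then the conjunction of these reach inequalities over all permutations of $M$.

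\emph{Necessity.} Fix $e\in M$ and choose an order placing $e$ last. Its reach is
\[
\rho=1-\sum_{f\ne e}q_f=q_0+q_e,
\]
so the inequality at the last position reads $q_e\le p_e(q_0+q_e)$. Rearranging gives the equivalent form $(1-p_e)q_e\le p_eq_0$ in~\eqref{eq:invariant}.

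\emph{Sufficiency.} Fix any order and any position $r$. The reach of $e_r$ satisfies
\[
\rho_r=1-\sum_{h<r}q_h=q_0+\sum_{h\ge r}q_h\ge q_0+q_{e_r}.
\]
Hence $p_r\rho_r\ge p_r(q_0+q_r)\ge q_r$ by~\eqref{eq:invariant}. So the fixed-order region holds at every position, and Lemma~\ref{lem:target} supplies a lottery, in fact independent retention, for each order. The key observation is that the last position is the binding one for each edge, because reach only shrinks as earlier service accumulates.

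\emph{Parametrization when $q_0>0$.} Set $w_e=q_e/q_0\ge0$. The constraint $\sum_e q_e+q_0=1$ gives $q_0(1+\sum_f w_f)=1$, which yields the stated formulas for $q_0$ and $q_e$. Dividing $(1-p_e)q_e\le p_eq_0$ by $q_0>0$ gives $(1-p_e)w_e\le p_e$. Conversely, any nonnegative $w$ obeying this bound defines a vector with $q_0=1/(1+\sum_f w_f)>0$, components summing to one, and satisfying~\eqref{eq:invariant}. The correspondence is therefore a bijection.

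\emph{Case $q_0=0$.} Here~\eqref{eq:invariant} becomes $(1-p_e)q_e\le0$, so $q_e>0$ forces $p_e=1$. The vectors are therefore exactly the probability vectors supported on certain-click edges. Conversely, any such vector satisfies $(1-p_e)q_e=0\le p_eq_0$ and is attainable by sufficiency.

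The main obstacle is careful bookkeeping at the edge cases, namely $p_e\in\{0,1\}$ and zero reach. These are already handled by Lemma~\ref{lem:target}, which allows any retention probability when $p_r\rho_r=0$, so little extra work should be needed there.
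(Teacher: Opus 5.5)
Your proposal is correct and follows the paper's proof essentially step for step. You put each edge last for necessity, bound the prefix mass by $\sum_{f\ne e}q_f$ (equivalently $\rho_r\ge q_0+q_{e_r}$) so that Lemma~\ref{lem:target} applies in every order, substitute $w_e=q_e/q_0$ when $q_0>0$, and note that $q_0=0$ forces support on certain-click edges.
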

\begin{proof}
Putting $e$ last in~\eqref{eq:region} gives $q_e\le p_e(q_0+q_e)$, which is~\eqref{eq:invariant}. Conversely, any prefix preceding $e$ has target mass at most $\sum_{f\ne e}q_f$, so~\eqref{eq:invariant} implies~\eqref{eq:region} in every order. If $q_0>0$, substitute $w_e=q_e/q_0$ and use $q_0+\sum q_e=1$. If $q_0=0$, the inequality forces $q_e=0$ whenever $p_e<1$, and Lemma~\ref{lem:target} implements the remaining vector.
\end{proof}

Retention adapts to each order while click probabilities remain fixed. The auction itself renders only in editorial order.

\subsection{Optimal Order-Invariant Service}
Theorem~\ref{thm:invariant} represents order-invariant targets by public weights bounded by the odds $o_e=p_e/(1-p_e)$ for $p_e<1$. Lemma~\ref{lem:weight-monotonicity} shows that optimizing over sub-matchings lets us raise these weights to the odds bounds without losing welfare, since downward closure permits removing edges below the optimal welfare ratio. Appendix~\ref{app:weights} also gives the finite-weight construction.

In the MNL model of~\cite{anchor}, an edge with $0<p_e<1$ has attraction $o_e$, the exponential of its log-odds. Attraction is zero when $p_e=0$. Thus, on any matching with all $p_e<1$, odds thinning realizes exactly the MNL click probabilities under the same $p_e$.

Certain-click edges have no finite odds, so we include their unthinned singletons. These allocations also cover targets with total click probability one. Let $E_{<}=\{e:p_e<1\}$ and $E_1=\{e:p_e=1\}$. Define the public range $\mathcal R^{\mathrm{odds}}$ to contain the empty outcome, every odds lottery on a feasible matching in $E_{<}$, and every unthinned feasible singleton in $E_1$. For $K\ge1$, the optimal reported welfare over this range is
\begin{equation}\label{eq:complete}
\begin{aligned}
r^{\mathrm{odds}}(b)&=\max\Bigl\{r_{<}(b),\max_{e=(i,j)\in E_1}b_i\Bigr\},\\
r_{<}(b)&=\max_{M\subseteq E_<}\frac{\sum_{e=(i,j)\in M}b_io_e}{1+\sum_{e\in M}o_e}.
\end{aligned}
\end{equation}
Here $M$ ranges over capped matchings, and a maximum over an empty set of singletons is zero. For $K=0$, the mechanism uses the empty outcome. Otherwise it selects a maximizing range entry by the public tie rule and computes all pivots over the same range. Algorithm~\ref{alg:auction} summarizes selection, payments, and retention.

\begin{algorithm}[t]
\caption{Odds mechanism}\label{alg:auction}
\begin{algorithmic}[1]
\Require Public $E,p,K$, editorial order and tie rule; bids $b\ge0$
\If{$K=0$} \State Render nothing, charge zero, and stop. \EndIf
\State Compute an exact optimizer of $r_{<}(b)$ in~\eqref{eq:complete}, allowing the empty matching.
\State Choose a welfare-maximizing $z^*$ among this optimizer and feasible $E_1$ singletons, using the public tie rule.
\State $q_i\leftarrow0$, $c_i\leftarrow0$ for every advertiser $i$.
\If{$z^*=\{(i,j)\}$ is a certain-click singleton}
\State $q_i\leftarrow1$; retain its edge with probability one.
\Else
\State Let $M=z^*=(e_1,\ldots,e_k)$ be the selected matching in editorial order, with $o_e=p_e/(1-p_e)$ and $O_r=\sum_{h=r}^k o_{e_h}$.
\State $q_i\leftarrow o_{(i,j)}/(1+O_1)$ for each $(i,j)\in M$.
\State $a_r\leftarrow o_{e_r}/[p_{e_r}(1+O_r)]$ if $p_{e_r}>0$, and $a_r\leftarrow0$ otherwise.
\EndIf
\For{each advertiser $i$ with $q_i>0$}
\State $H_i\leftarrow r^{\mathrm{odds}}(b_{-i},0)$ using both branches of~\eqref{eq:complete}.
\State $c_i\leftarrow\bigl[H_i-\sum_{h\ne i}b_hq_h\bigr]/q_i$.
\EndFor
\State If $z^*\subseteq E_{<}$, retain each $e_r$ independently with probability $a_r$, before user clicks.
\State Render retained ads in editorial order; charge $c_i$ only if $i$ clicks.
\end{algorithmic}
\end{algorithm}

\begin{theorem}[Optimal range of order-invariant matching targets]\label{thm:complete}
In the model of Section~\ref{sec:model}, $\mathcal R^{\mathrm{odds}}$ maximizes reported welfare over mixtures of targets that satisfy Theorem~\ref{thm:invariant} separately on each feasible matching. It dominates every finite implementable weight table and every finite menu of such tables. Exact selection with same-range pivots is truthful in expectation, individually rational in every realization at truthful reports, and achieves at least $\OPT(v)/2$. The computational and sampling guarantees of Theorem~\ref{thm:main} hold, with at most one allocation and $K$ pivot-value optimizations. For $K=1$, welfare is optimal.\looseness=-1\par
\end{theorem}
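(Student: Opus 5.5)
The proof has four parts: range optimality, incentives, welfare, and computation. Throughout I would reuse the arguments of Theorem~\ref{thm:main}; the new work is confined to range optimality and the $E_1$ branch.

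\textbf{Range optimality.} Since welfare is linear, reported welfare of a mixture is at most that of its best component. So it suffices to handle a single feasible matching $M$ with an order-invariant target $\boldsymbol q$, which I split by Theorem~\ref{thm:invariant}.
\begin{itemize}
\item If $q_0=0$, then $\boldsymbol q$ is a probability vector on certain-click edges. Its welfare $\sum b_iq_e$ is at most $\max_{e\in E_1}b_i$, which is attained by an unthinned singleton in the range.
\item If $q_0>0$, then $q_e=w_e/(1+\sum w_f)$ with $0\le w_e\le o_e$ for $p_e<1$. An edge with $p_e=1$ and $q_0>0$ may carry any weight, and I will deal with it below.
\end{itemize}
The key step is a weight-raising lemma (Lemma~\ref{lem:weight-monotonicity}). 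Fix the ratio value $\rho$ of weights $w$ on $M$. Discard every edge with $b_i\le\rho$. By downward closure this does not lower the ratio, because removing an edge whose value is below the current average raises the average. On the remaining edges, $\partial/\partial w_e$ of the ratio has the sign of $b_i-\rho>0$, so raising each $w_e$ to $o_e$ only increases the ratio. I would make this rigorous by raising one coordinate at a time and re-discarding edges as the ratio grows, which terminates on a finite matching.

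The main obstacle is edges with $p_e=1$ and finite weight inside a $q_0>0$ target, since their odds are infinite. For such an edge, the ratio is bounded by a convex combination of $b_i$ and the ratio without that edge. Hence it is at most $\max\{b_i,\ \text{ratio on }M\setminus\{e\}\}$, and both terms are dominated by the range. Dominance over finite weight tables and menus then follows, because each table entry is such a target.

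\textbf{Incentives.} Truthfulness, realized individual rationality, the absence of negative transfers, and monotonicity (Proposition~\ref{prop:monotone}) carry over verbatim from the proof of Theorem~\ref{thm:main}. Those arguments use only exact maximization over a fixed public range with pivots over the same range. Implementability of each range entry follows from Lemma~\ref{lem:target}: the stated $a_r=o_r/(p_r(1+O_r))$ is exactly $q_r/(p_r\rho_r)$ with reach $(1+O_r)/(1+O_1)$. This lies in $[0,1]$ because $o_r\le p_r(1+O_r)$, which is equivalent to $o_r(1-p_r)\le p_r(1+O_{r+1})$.

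\textbf{Welfare.} The range contains the basic lottery ($w=p\le o$) as a finite implementable weight table, and certain-click singletons are also basic lotteries up to dominance. So $r^{\mathrm{odds}}(v)\ge r\ge \OPT(v)/2$ by Theorem~\ref{thm:main}.

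For $K=1$, $\OPT$ is the best singleton value $v_ip_e$. If $p_e<1$, the odds lottery gives $v_io_e/(1+o_e)=v_ip_e$. If $p_e=1$, the unthinned singleton gives $v_i$. Hence the mechanism is optimal.

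\textbf{Computation.} I would apply Lemma~\ref{lem:lfm} with $w=o$ restricted to $E_<$, and compare the result against the $E_1$ singletons by direct enumeration. This requires one allocation solve and at most $K$ pivot solves, since only served advertisers need pivots, as before. Sampling follows the appendix. Every $a_r$ is rational because $o_e$ is rational.
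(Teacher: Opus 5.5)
Your proposal is correct and follows essentially the paper's route. You reduce mixtures to single components, split off certain-click edges via the convex-combination identity, raise the remaining uncertain weights to the odds by weight monotonicity over sub-matchings, and inherit incentives, the half-welfare bound (through dominance of the $w=p$ targets), $K=1$ optimality, and computation from Theorem~\ref{thm:main}.

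Your discard-and-raise iteration is a hands-on proof of Lemma~\ref{lem:weight-monotonicity}; the paper instead takes a ratio-optimal sub-matching, whose edges all have $b_i\ge r$, and raises all weights at once. To justify the word ``maximizes'' you should also add, as the paper does, that every range entry is itself order-invariant: $w=o$ meets the bound of Theorem~\ref{thm:invariant} with equality, and certain-click singletons are its $q_0=0$ case.
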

\begin{proof}
First consider targets with $q_0>0$ and no certain-click edges. Their weights satisfy $0\le w_e\le o_e$. Apply Lemma~\ref{lem:weight-monotonicity}, which requires only nonnegative weights, to the subsets of this matching, increasing $w$ to $o$. Some odds subset attains at least the target welfare.

For finite weights on a matching with uncertain part $U$ and certain part $C$, its ratio equals
\[
R_w(U\cup C)=\frac{\bigl(1+\sum_{e\in U}w_e\bigr)R_w(U)+\sum_{e=(i,j)\in C}b_iw_e}{1+\sum_{e\in U}w_e+\sum_{e\in C}w_e}.
\]
It is a convex combination of $R_w(U)$ and the certain-edge bids. Downward closure makes $U$ and every such singleton feasible. The range dominates each term. Targets with $q_0=0$ are convex combinations of certain-click singleton services, and mixtures cannot increase a linear welfare maximum. Every range entry is itself order-invariant, proving equality of the maxima.

The basic $w=p$ targets belong to this class, so dominance and Theorem~\ref{thm:main} give the half-welfare bound against the freely ordered optimum. Public range selection, pivots, and the charge bound prove incentives as before. The $r_{<}$ branch uses the exact fractional matching algorithm. The $E_1$ branch is a maximum over certain-edge bids, repeated in each positive-service pivot computation. Odds have polynomial rational encoding length. For $K=1$, the uncertain singleton has service $o_e/(1+o_e)=p_e$, and certain singletons have service one, giving the exact welfare optimum.
\end{proof}

\noindent\textit{Example (Aggregate order invariance).} The per-matching requirement in Theorem~\ref{thm:complete} matters. Requiring only aggregate advertiser service to be order-invariant defines a larger class. With two advertisers, two contexts, $K=2$, $p_{ij}=1/2$, and unit values, choose each perfect matching with probability $1/2$ and render it without omission. Each advertiser is first or second with equal probability, so its click service is $\tfrac12(\tfrac12+\tfrac14)=3/8$ under either context order. Total welfare is $3/4$, whereas the best odds matching gives $2/3$. Each component matching in this lottery has order-dependent edge services. Theorem~\ref{thm:complete} optimizes mixtures whose component targets are individually order-invariant.

\noindent\textit{Running example (Odds lottery).} The odds mechanism selects $Ax$ alone, with service $4/5$, click price $27/8$, and welfare $16/5$, about $86\%$ of $\OPT(v)$, up from $58\%$ under the basic mechanism.

Refining the threshold comparison of Theorem~\ref{thm:main} for odds weights yields the sharp cap--probability guarantee of Theorem~\ref{thm:sharp}. Appendix~\ref{app:augmentation} adds public order-dependent lotteries while preserving these guarantees and incentives.

\section{The Sharp Cost of Editorial Order}\label{sec:editorial}
\subsection{Tight Dependence on Cap and Probability}
The odds mechanism is optimal when at most one ad is allowed. With several ads, its welfare guarantee depends on the ad cap and the largest click probability. We show that this guarantee is sharp among all rules preserving editorial order: even without incentive or computational restrictions, no such rule has a better worst-case factor against the freely ordered optimum.
 The lower-bound instances place a low-probability, high-value ad after competing ads, whose earlier clicks prevent it from receiving service.

\begin{theorem}[Sharp guarantee]\label{thm:sharp}
If $K\ge1$ and $p_e\le\delta<1$ for every edge, then
\begin{equation}\label{eq:sharp}
\OPT(v)\le\alpha(K,\delta)\,r^{\mathrm{odds}}(v),\qquad\alpha(K,\delta)=2-(1-\delta)^{K-1}.
\end{equation}
For every $0<\delta<1$ and integer $K\ge1$, no smaller worst-case factor against $\OPT$ is attainable by any allocation rule preserving editorial order, even without incentive or computational restrictions. The bound is tight as a supremum on disjoint edges.
\end{theorem}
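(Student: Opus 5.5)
The proof splits into the upper bound~\eqref{eq:sharp}, which refines the threshold comparison in the proof of Theorem~\ref{thm:main}, and a matching lower-bound family on disjoint edges.

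\textbf{Upper bound.} Put $r=r^{\mathrm{odds}}(v)\ge r_<(v)$. If $r=0$, then $\OPT=0$ as before. Otherwise fix an optimal $M_0$ for~\eqref{eq:benchmark} and its feasible high-value subset $T=\{e=(i,j)\in M_0:v_i>r\}$. Since all $p_e<1$, $T\subseteq E_<$, and optimality of $r_<$ over the sub-matching $T$ with odds weights gives
\[
\sum_{e\in T}o_e x_e\le r,\qquad x_e=v_i-r>0 .
\]
This is the analogue of~\eqref{eq:excess}. Render $M_0$ in decreasing-value order, so that $T$ forms a prefix $(e_1,\dots,e_t)$. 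Edges outside $T$ have value at most $r$. Writing $v_e=r+x_e$ on $T$ then gives
\[
\OPT\le r\Bigl(1-\prod_{f\in M_0}(1-p_f)\Bigr)+\sum_{s\le t}x_{e_s}p_{e_s}\prod_{h<s}(1-p_{e_h}).
\]
The second term is linear in $x$. Since $p_e=o_e(1-p_e)$, each unit of the budget $\sum o_ex_e\le r$ placed on $e_s$ contributes at most $\prod_{h\le s}(1-p_{e_h})\le 1-p_{e_1}$. The second term is therefore at most $r(1-p_{e_1})$.

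Since $|M_0|\le K$ and every $p_f\le\delta$,
\[
\prod_{f\in M_0}(1-p_f)\ge(1-p_{e_1})(1-\delta)^{K-1}.
\]
Combining,
\[
\OPT\le r\bigl[1+(1-p_{e_1})(1-(1-\delta)^{K-1})\bigr]\le\alpha(K,\delta)\,r .
\]
If $T=\emptyset$, the first term alone gives $\OPT\le r$. The main care here is checking that the linear-budget argument is legitimate: the prefix structure of $T$ under decreasing order is what makes the per-unit coefficient a prefix product.

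\textbf{Lower bound.} Take $K$ disjoint edges, each with its own advertiser and context. In editorial order, the first $K-1$ edges have $p=\delta$ and value $1$, and the last, rare edge has $p=\varepsilon$ and value $1/\varepsilon$. Put $y=(1-\delta)^{K-1}$. The freely ordered optimum renders the rare ad first, giving
\[
\OPT\ge\varepsilon\cdot\tfrac1\varepsilon+(1-\varepsilon)(1-y)\to2-y .
\]

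Now consider any order-preserving rule, possibly using a correlated omission lottery. Welfare is linear in the lottery, so it suffices to bound deterministic retained subsets. Let a subset keep $s$ of the $\delta$-ads and possibly the rare ad. Its welfare is
\[
1-(1-\delta)^s+(1-\delta)^s\cdot\varepsilon\cdot\tfrac1\varepsilon=1
\]
if it keeps the rare ad, and $1-(1-\delta)^s\le1$ otherwise. Every order-preserving rule thus earns at most $1$, so the ratio tends to $2-y=\alpha(K,\delta)$ as $\varepsilon\to0$. This shows that no smaller factor is attainable.

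Because the odds mechanism satisfies~\eqref{eq:sharp} on every instance, including these disjoint-edge ones, and this family forces ratios tending to $\alpha(K,\delta)$ for every order-preserving rule including it, the factor is tight as a supremum on disjoint edges. The choice $\varepsilon v=1$ for the rare ad is the balancing point: for $\varepsilon v\ne1$, the best subset is either "all ads" or "rare ad only", and the resulting ratio is smaller.
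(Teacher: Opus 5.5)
Your proof is correct and follows the paper's argument. It uses the same high-value subset $T$, the same odds budget $\sum_{T}o_e(v_i-r)\le r$ converted through $p_e=(1-p_e)o_e$, the same pairing of one edge's factor $1-p$ in both the excess term and the no-click probability, and the identical disjoint-edge lower-bound family. The only difference is cosmetic: you keep the cascade prefix products from the decreasing-value rendering and charge the factor to the top edge $e_1$ of $T$, whereas the paper uses the pointwise threshold inequality and charges it to the smallest positive probability in $T$, which forces a separate all-zero-probability case that your version absorbs.
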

\begin{proof}
Write $r=r^{\mathrm{odds}}(v)$. If $r=0$, singleton optimality gives $\OPT=0$. Fix a matching $M_0$ attaining the maximum in~\eqref{eq:benchmark} and let $T=\{e=(i,j)\in M_0:v_i>r\}$. Since $T$ is feasible, its odds ratio is at most $r$, giving $\sum_T o_e(v_i-r)\le r$. If $T$ has a positive-probability edge, let $t$ be the smallest such probability. Since $p_e=(1-p_e)o_e$, the excess satisfies $\sum_{e\in T}p_e(v_i-r)\le(1-t)r$.
The threshold term is needed only when at least one click trial succeeds. Indeed, put $X=\max(\{v_iZ_e:e\in M_0\}\cup\{0\})$ and let $B$ be the event that some $Z_e=1$. Pointwise, $X\le r\mathbf1_B+\sum_{e\in T}(v_i-r)Z_e$: both sides vanish off $B$, and on $B$ the largest successful value is bounded by the threshold plus the positive excess. Taking expectations, as in the threshold comparison of~\cite{limits}, gives
\[
\OPT(v)\le r\left[1-\prod_{e\in M_0}(1-p_e)\right]+\sum_{e\in T}p_e(v_i-r).
\]
Set $A=(1-\delta)^{K-1}$. Since $M_0$ contains an edge of probability $t$ and at most $K-1$ others, its no-success probability is at least $(1-t)A$. Hence
\[
\OPT(v)\le r\bigl[1+(1-t)(1-A)\bigr]\le(2-A)r.
\]
If all edges in $T$ have probability 0, the excess vanishes, so $\OPT\le r$.

For tightness take $K$ disjoint edges: one has $p_H=\epsilon$, $v_H=1/\epsilon$, where $0<\epsilon\le\delta$. The other $K-1$ have probability $\delta$ and value one. Every odds subset containing $H$ has ratio one, since $v_Ho_H=1+o_H$, while subsets omitting $H$ have ratio below one. The $H$-first cascade attains the freely ordered optimum
\[
\OPT(v)=1+(1-\epsilon)(1-A)
\xrightarrow[\epsilon\downarrow0]{}2-(1-\delta)^{K-1}.
\]
Fix the editorial order with all unit-value edges before $H$. Since $p_Hv_H=1$, rendering $H$ with $\ell$ unit-value edges yields welfare
\[
\bigl[1-(1-\delta)^\ell\bigr]+(1-\delta)^\ell p_Hv_H=1.
\]
Renderings without $H$ have welfare at most one. Thus the best fixed-order welfare is exactly one, and every lottery over such renderings has welfare at most one. This proves the lower bound for all order-preserving rules. For irrational $\delta$, rational probabilities approaching $\delta$ from below give the same supremum. For $K=1$ the factor is exactly one.
\end{proof}

Reversing the order of this family separates order loss from range loss. With $H$ first, the fixed-order optimum equals $\OPT=1+(1-\epsilon)[1-(1-\delta)^{K-1}]$, yet the odds welfare remains one, so the entire loss comes from the range. Writing $F$ for the fixed-order optimum, the bound~\eqref{eq:sharp} holds against $F$ since $F\le\OPT$, and this family shows that it is tight. The public maximum-$\sum p_e$ augmentation (Appendix~\ref{app:augmentation}) selects all $K$ edges, and their unthinned rendering attains the fixed-order optimum in both orientations: $\OPT$ with $H$ first and one with $H$ last.

\begin{proposition}[Optimal uniform factor]\label{prop:tight}
The optimal uniform worst-case factor against the freely ordered optimum among all allocation rules preserving editorial order is two.
\end{proposition}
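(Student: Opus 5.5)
The claim has two directions. First, some editorial-order-preserving rule achieves factor two on every instance. Second, no such rule achieves any factor $c<2$ uniformly over all instances, with no bound on standalone probabilities.

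\textbf{Upper bound.} I would take the basic mechanism of Theorem~\ref{thm:main}, or equally the odds mechanism of Theorem~\ref{thm:complete}. It renders only subsets of the selected matching, and always in editorial order, so it preserves editorial order. Its truthful welfare is at least $\OPT(v)/2$ on every instance, including those with $p_e=1$, because the threshold argument never assumed $p_e<1$. Hence factor two is attainable.

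\textbf{Lower bound.} I would reuse the hard family from the proof of Theorem~\ref{thm:sharp} and let $\delta\uparrow 1$.

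Fix $K\ge2$. Take $K$ disjoint edges:
\begin{itemize}
\item one high edge $H$ with $p_H=\epsilon$ and $v_H=1/\epsilon$;
\item $K-1$ unit-value edges with probability $\delta<1$.
\end{itemize}
Place all unit edges before $H$ in the editorial order.

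The computation in Theorem~\ref{thm:sharp} gives two facts. Every rendering preserving this order has welfare at most one, so every lottery over such renderings does too. Meanwhile $\OPT(v)=1+(1-\epsilon)\bigl[1-(1-\delta)^{K-1}\bigr]$. Letting $\epsilon\downarrow0$ and then $\delta\uparrow1$ drives the ratio to $2$. So for any $c<2$ some instance defeats $c$.

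Even $K=2$ suffices: a single unit edge of probability $\delta$ gives ratio $2-\epsilon-\delta+\epsilon\delta\to2$. I would state the result for $K\ge2$, since the statement is taken over all instances and the cap is part of the instance. I would also remark that for $K=1$ the factor is one, by Theorem~\ref{thm:complete}.

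Alternatively, one can set $\delta=1$ directly. A certain-click unit edge placed before $H$ forces welfare exactly $1$ whenever that edge is rendered, and at most $1$ otherwise, while $\OPT=1+(1-\epsilon)$. This avoids the limit in $\delta$ but needs separate care with $p_e=1$. I would present the $\delta\uparrow1$ version via Theorem~\ref{thm:sharp} and mention the $\delta=1$ instance as a remark.

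\textbf{Main obstacle.} The only delicate point is quantifiers. \emph{Optimal uniform factor} means the infimum over rules of the supremum over instances. We must show that no rule attains a factor below two, and that two itself is attained, not merely approached. The upper bound gives attainment. The lower bound holds for every rule simultaneously, because the bound of one on fixed-order welfare holds for every lottery over order-preserving renderings. So the infimum equals two and is achieved.
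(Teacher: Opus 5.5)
Your argument is correct and essentially the paper's. The paper uses exactly the $K=2$ member of your family with the coupling $\delta=1-\epsilon^2$: putting $L$ before $H$ gives fixed-order welfare exactly one, while $\OPT=1+(1-\epsilon)(1-\epsilon^2)\to2$ in a single limit, and it cites Theorem~\ref{thm:complete} for the factor-two upper bound. One slip in your $K=2$ remark: the ratio there is $1+(1-\epsilon)\delta$, not $2-\epsilon-\delta+\epsilon\delta=1+(1-\epsilon)(1-\delta)$, which would tend to $1$ rather than $2$; your general-$K$ formula and its limit are correct.
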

\begin{proof}
Take two disjoint edges with $p_H=\epsilon$, $v_H=1/\epsilon$, $p_L=1-\epsilon^2$, and $v_L=1$. Put $L$ before $H$. The best fixed-order welfare is one, while the freely ordered welfare is $1+(1-\epsilon)(1-\epsilon^2)\to2$. Theorem~\ref{thm:complete} supplies the factor-two guarantee for all instances.
\end{proof}

\subsection{Thinning versus Direct Rendering}
The odds optimizer selects a matching as well as a click target. If value-based reordering is permitted, the entire matching can be rendered without omission while retaining the welfare guarantee.

\begin{corollary}[Deterministic free-order allocation]\label{cor:unthinned}
Select a welfare-maximizing entry of $\mathcal R^{\mathrm{odds}}$ at values $v$, render its underlying matching without omission, and order it by decreasing value. This deterministic allocation has expected welfare at least $\OPT(v)/2$. Under the hypotheses of Theorem~\ref{thm:sharp}, it attains the stronger factor $2-(1-\delta)^{K-1}$.
\end{corollary}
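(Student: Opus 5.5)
The plan is to show that rendering the selected matching in full, in decreasing-value order, yields at least the welfare $r^{\mathrm{odds}}(v)$ of the odds lottery on that same matching. The factors $2$ and $2-(1-\delta)^{K-1}$ then follow from Theorem~\ref{thm:complete} and Theorem~\ref{thm:sharp}, which bound $\OPT(v)$ by those multiples of $r^{\mathrm{odds}}(v)$.

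First I handle the certain-click singleton branch. If the maximizer is a singleton $e=(i,j)\in E_1$, then rendering it unthinned is exactly the range entry, so welfare equals $r^{\mathrm{odds}}(v)$.

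Otherwise the maximizer is a matching $M\subseteq E_<$, whose odds service is $q_e=o_e/(1+\sum_{f\in M}o_f)$. The key step is to show that the decreasing-value unthinned cascade on $M$ has welfare at least $\sum_{e=(i,j)\in M}v_iq_e$. I would argue this through the benchmark representation~\eqref{eq:benchmark}. The decreasing-value rendering of $M$ gives welfare $\mathbb E[\max(\{v_iZ_e:e\in M\}\cup\{0\})]$, since the first click in value order is the largest successful value. I must compare this with the odds lottery, which is a randomized omission of $M$ rendered in some order. For any realized rendered subset $S\subseteq M$, the first click in any order has value at most $\max\{v_iZ_e:e\in S\}\le\max\{v_iZ_e:e\in M\}$. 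This holds pointwise in the Bernoulli trials, and the retention coins are independent of those trials. Taking expectations, any omission lottery on $M$, in any order, earns at most the unthinned decreasing-value welfare. In particular this covers the editorial-order odds lottery realizing $r^{\mathrm{odds}}(v)$ via Lemma~\ref{lem:target} and Theorem~\ref{thm:invariant}. Hence the deterministic allocation earns at least $r^{\mathrm{odds}}(v)$.

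It remains to chain the inequalities. Theorem~\ref{thm:complete}, through the threshold comparison in the proof of Theorem~\ref{thm:main} and dominance over the $w=p$ targets, gives $\OPT(v)\le 2r^{\mathrm{odds}}(v)$. Under $p_e\le\delta<1$, inequality~\eqref{eq:sharp} gives $\OPT(v)\le\alpha(K,\delta)r^{\mathrm{odds}}(v)$. Feasibility is immediate, since the rendered set is a capped matching, and determinism holds because no coins are drawn.

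I expect the main obstacle to be stating the pointwise domination cleanly. It must hold for an arbitrary realized subset and order, and ties in value or zero-probability edges must not break the identity ``first click in decreasing-value order equals the maximum successful value.'' Tie-breaking among equal values does not change the maximum, so I expect this to go through routinely.
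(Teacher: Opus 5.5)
Your proposal is correct and takes the same route as the paper. The paper's proof also couples the unthinned value-ordered rendering and the odds lottery through the same Bernoulli trials $Z_e$, observes that the first click in any retained subsequence is worth at most $\max(\{v_iZ_e:e\in M\}\cup\{0\})$, and then applies Theorems~\ref{thm:complete} and~\ref{thm:sharp}; your separate handling of the certain-click singleton branch is a harmless extra detail that this pointwise argument already covers.
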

\begin{proof}
Fix the selected matching $M$ and couple all renderings using the same Bernoulli variables $Z_e$. The full value-ordered rendering obtains $\max(\{v_iZ_e:e\in M\}\cup\{0\})$. For every omission realization, the first click in any retained subsequence has value at most this maximum. Taking expectations shows that the full value-ordered welfare dominates the odds welfare. Apply Theorems~\ref{thm:complete} and~\ref{thm:sharp}.
\end{proof}

Direct rendering in editorial order need not preserve the odds optimizer's monotonicity: a higher bid can bring a competitor ahead of the bidder, reducing the bidder's click service.

\begin{proposition}[Direct rendering]\label{prop:rawnonmonotone}
Optimizing the odds ratio and rendering the selected matching without omission in editorial order need not yield monotone actual click service.
\end{proposition}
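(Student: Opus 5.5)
The plan is to prove the proposition with an explicit counterexample on two advertisers and two contexts. It reuses the structure of the running example: advertiser $A$ has a weak edge $Ax$ in the early context and a strong edge $Ay$ in the later context. Advertiser $B$ has edges to both contexts, and $Bx$ is relatively strong. The editorial order is $x$ before $y$, and $K=2$, with all probabilities below one so that only the $r_{<}$ branch of~\eqref{eq:complete} matters. The intended mechanism of failure is the one described before the statement. At a low bid $b_A$ the odds optimizer selects the diagonal $\{Ax,By\}$, so $A$ is rendered first and its direct (unthinned) service is $p_{Ax}$. At a higher bid $b_A'$ the optimizer switches to the off-diagonal $\{Bx,Ay\}$, because $Ay$ has much larger odds. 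Under direct rendering, $A$ now sits behind $B$ and receives only $(1-p_{Bx})p_{Ay}$.

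The first step is to choose numbers satisfying three conditions:
\[
p_{Ax} > (1-p_{Bx})\,p_{Ay}, \qquad \text{the diagonal is odds-optimal at } b_A, \qquad \text{the off-diagonal is odds-optimal at } b_A'.
\]
A candidate is $p_{Ax}=1/2$, $p_{Ay}=9/10$ and $p_{Bx}=1/2$. These give odds $o_{Ax}=1$, $o_{Ay}=9$ and $o_{Bx}=1$, and service $1/2$ at the low bid against $9/20$ at the high bid. I would take $p_{By}$ moderate, say $1/2$ (odds $1$), and fix $b_B=1$.

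The second step is to compare the odds ratios of all feasible outcomes at the two bids. The feasible outcomes are the empty matching, the four singletons and the two perfect matchings. For example, the off-diagonal has ratio $(b_B+9b_A)/11$, while $\{Ay\}$ alone has ratio $9b_A/10$. The off-diagonal therefore beats $\{Ay\}$ exactly when $10b_B>9b_A$. The diagonal has ratio $(b_A+b_B)/3$, and the singleton $\{Bx\}$ has ratio $b_B/2$.

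The main obstacle is the low-bid regime. There the diagonal must beat $\{Bx\}$ and $\{By\}$, which requires a fairly large $b_A$. But $b_A$ must also stay small enough that the off-diagonal does not already win. The window is therefore narrow, and I expect the first choice of numbers to need tuning. My first adjustment would be to make $B$'s edges less attractive in isolation (lower $p_{By}$) while keeping $Bx$ useful as a complement to $Ay$.

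To widen the window, I would raise $p_{Ay}$ toward one, so that $o_{Ay}$ grows and the switch happens at a smaller $b_A$. I would then restore the service inequality by raising $p_{Bx}$, which lowers $(1-p_{Bx})p_{Ay}$. Once the numbers work, the proof reduces to exhibiting both bid profiles, identifying each optimizer (breaking ties via the public rule, or avoiding ties with strict inequalities), and computing $A$'s direct cascade service in each case. That $A$'s service strictly drops while its bid rises contradicts monotonicity. This should be contrasted with Proposition~\ref{prop:monotone}: there thinning makes service exactly the optimized quantity, whereas direct rendering decouples the two.
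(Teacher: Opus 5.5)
Your strategy is viable and differs from the paper's. The paper uses three single-edge advertisers ($B1$, $A2$, $C3$), so $A$'s edge never changes and only the competitor moves from behind $A$ to in front of it. In your instance, $A$ itself switches from the early weak edge to the late strong one. But for a statement of this kind the verified instance is the entire proof, and you never produce one. Your candidate is exactly degenerate. With $b_B=1$, the diagonal beats $\{Bx\}$ and $\{By\}$ only when $b_A>1/2$, and it beats the off-diagonal only when $b_A<1/2$. At $b_A=1/2$ the four outcomes $\{Bx\}$, $\{By\}$, $\{Ax,By\}$, $\{Bx,Ay\}$ all have ratio $1/2$. A lexicographically-least tie rule also prefers $\{By\}$ to its superset $\{Ax,By\}$, so the diagonal is never selected.

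The tuning you propose cannot repair this, because it moves in the wrong direction. Write $o_1,o_2,o_3,o_4$ for the odds of $Ax,Ay,Bx,By$ and keep $b_B=1$. The ratio of each outcome is linear in $b_A$ with slope equal to $A$'s thinned service. So the off-diagonal can overtake the diagonal only if $o_2(1+o_4)>o_1(1+o_3)$. In that case the diagonal beats the off-diagonal exactly for $b_A<a^*=[o_4(1+o_2)-o_3(1+o_1)]/[o_2(1+o_4)-o_1(1+o_3)]$, and it beats $\{By\}$ exactly for $b_A>p_{By}$. Expanding gives $a^*(1+o_4)-o_4=(o_4-o_3)(1+o_1+o_4)/[o_2(1+o_4)-o_1(1+o_3)]$. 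Hence a nonempty diagonal window requires $p_{By}>p_{Bx}$, whatever $p_{Ay}$ is. Your candidate has equality. Lowering $p_{By}$ or raising $p_{Bx}$ empties the window, and raising $p_{Ay}$ does not help.

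The fix is to raise $p_{By}$. Keep $p_{Ax}=p_{Bx}=1/2$, $p_{Ay}=9/10$, $b_B=1$, and set $p_{By}=9/10$. The ratios are then $(b_A+9)/11$ for the diagonal and $(9b_A+1)/11$ for the off-diagonal. The singletons $Ax$, $Ay$, $Bx$, $By$ have ratios $b_A/2$, $9b_A/10$, $1/2$, $9/10$. The diagonal is the unique optimizer for $9/10<b_A<1$, and the off-diagonal for $1<b_A<10/9$. At $b_A=19/20$, $A$ is rendered first and clicks with probability $1/2$. At $b_A=21/20$, it follows $Bx$ and clicks with probability $9/20$, although its odds-thinned service rises from $1/11$ to $9/11$. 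With this instance your argument goes through.
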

\begin{proof}
Take $K=2$ and contexts ordered $1<2<3$. The allowed edges are $B1$, $A2$, and $C3$. Their probabilities are $p_A=4/5$, $p_B=3/10$, and $p_C=9/10$. Fix $b_B=2$, $b_C=1$, and vary $b_A=x$. The pairwise odds ratios are $R(AC)=(4x+9)/14$, $R(AB)=(14x+3)/19$, and $R(BC)=69/73$. The singleton ratios are $4x/5$, $3/5$, and $9/10$. The first two pairwise ratios cross at $x=43/40$. At $x=107/100$, $AC$ uniquely maximizes the ratio over all feasible sets. At $x=108/100$, $AB$ uniquely maximizes it. In $AC$, $A$ is first and clicks with probability $4/5$. In $AB$, $B$ precedes $A$, reducing $A$'s service to $(1-3/10)4/5=14/25$. Thus increasing $A$'s bid strictly decreases its actual service. In contrast, odds thinning implements services $2/7$ and $14/19$, respectively.
\end{proof}

Adding the two incentive inequalities for a lower and a higher scalar type shows that any truthful payment rule requires nondecreasing expected service. Hence no payments make this direct-rendering allocation rule truthful in expectation.

\setlength{\textfloatsep}{12pt}
\section{Evaluation}\label{sec:evaluation}
Our synthetic comparisons address three questions: (1) how much welfare is lost to editorial order versus the mechanism's range, (2) how our mechanism compares with monotone bucket allocation across probability scales, and (3) when exact free-order optimization becomes computationally costly.

\subsection{Order Loss and Range Loss}\label{sec:rangecomparison}
We enumerate 144 four-advertiser, four-context instances with separable, fully contextual, and bottleneck structures, $K\in\{1,2,4\}$, $p_e\le\delta\in\{0.1,0.8\}$, and eight repetitions per combination. Values lie in $\{1/4,1,4,16\}$. Let $W$ be the welfare of the odds mechanism (odds for short), $F$ the exact fixed-order optimum, and $O=\OPT$ the freely ordered optimum. Exact fixed-order optimization with pivot payments truthfully implements $F$ (Appendix~\ref{app:augmentation}). For every positive-optimum instance, $W/O=(W/F)(F/O)$, where $W/F$ and $F/O$ measure welfare retained by the range and under editorial order, respectively. Appendix~\ref{app:experiments} specifies the generators and comparisons.

\begin{table}[t]
\centering\small
\captionsetup{position=bottom,skip=3pt}
\setlength{\tabcolsep}{2pt}
\setlength{\aboverulesep}{1pt}
\setlength{\belowrulesep}{1.5pt}
\renewcommand{\arraystretch}{0.96}
\begin{tabular*}{\columnwidth}{@{\extracolsep{\fill}}ccrrrrr@{}}
\toprule
\multicolumn{2}{c}{Setting} & \multicolumn{3}{c}{Welfare $W/O$} & \multicolumn{2}{c}{Decomposition}\\
\cmidrule(lr){1-2}\cmidrule(lr){3-5}\cmidrule(l){6-7}
$\delta$ & $K$ & Mean & Min. & Guarantee & $W/F$ & $F/O$\\
\midrule
0.1 & 2 & 0.963 & 0.930 & 0.909 & 0.970 & 0.992\\
    & 4 & 0.931 & 0.873 & 0.787 & 0.943 & 0.987\\
\addlinespace[1pt]
0.8 & 2 & 0.894 & 0.6427 & 0.556 & 0.908 & 0.985\\
    & 4 & 0.916 & 0.766 & 0.502 & 0.945 & 0.970\\
\bottomrule
\end{tabular*}
\caption{Odds welfare ratios over 24 instances per row. $W/F$ and $F/O$ are means. Guarantee is $1/\alpha(K,\delta)$ (Theorem~\ref{thm:sharp}).}\label{tab:ranges}
\end{table}

For $K=1$, odds, fixed-order OPT, and free-order OPT coincide. For multiple ads (Table~\ref{tab:ranges}), mean $F/O$ lies between $0.970$ and $0.992$, while mean $W/F$ lies between $0.908$ and $0.970$. The range restriction accounts for most of the mean loss in this sample. The minimum individual $W/O$ is $0.6427$. In expectation, odds omits $3.8\%$ and $12.8\%$ of its selected ads at $\delta=0.1$ and $0.8$, respectively (Appendix~\ref{app:structure}). Table~\ref{tab:methods} compares basic thinning, odds, and augmentation (Aug.) with the truthfully implementable best singleton (Single), the unthinned odds-selected matching (Raw), and matching with weights $v_ip_{ij}$ (Stand.). Raw retains more mean welfare than odds in every multiple-ad setting, but its actual click service need not be monotone in bids (Proposition~\ref{prop:rawnonmonotone}). Adding one public unthinned matching to the odds range improves welfare in 25 of the 96 multiple-ad instances while preserving the sharp guarantee.

\begin{table}[t]
\centering\small
\captionsetup{position=bottom,skip=3pt}
\setlength{\tabcolsep}{2pt}
\setlength{\aboverulesep}{1pt}
\setlength{\belowrulesep}{1.5pt}
\renewcommand{\arraystretch}{0.96}
\begin{tabular*}{\columnwidth}{@{\extracolsep{\fill}}ccrrrrrr@{}}
\toprule
\multicolumn{2}{c}{Setting} & \multicolumn{3}{c}{Baselines} & \multicolumn{3}{c}{Proposed mechanisms}\\
\cmidrule(lr){1-2}\cmidrule(lr){3-5}\cmidrule(l){6-8}
$\delta$ & $K$ & Single & Raw & Stand. & Basic & Odds & Aug.\\
\midrule
0.1 & 2 & 0.795 & 0.987 & 0.987 & 0.905 & 0.963 & 0.968\\
    & 4 & 0.721 & 0.974 & 0.974 & 0.876 & 0.931 & 0.962\\
\addlinespace[1pt]
0.8 & 2 & 0.830 & 0.952 & 0.977 & 0.6431 & 0.894 & 0.909\\
    & 4 & 0.874 & 0.948 & 0.642 & 0.626 & 0.916 & 0.929\\
\bottomrule
\end{tabular*}
\caption{Mean welfare/free-order OPT. All rules preserve editorial order. Raw and Stand. lack incentive guarantees.}\label{tab:methods}
\end{table}

\subsection{Comparison with Bucket Allocation}
On 48 instances with $K=m=4$, we compare odds with Balseiro et al.'s~\cite{anchor} monotone bucket allocation and a variant that samples only nonempty buckets. Bucket membership and emptiness depend only on public data, so both bid-independent mixtures preserve monotonicity. Both sample buckets uniformly, render in greedy selection order, and return empty output if all buckets are empty.

At $\delta=0.1$, mean welfare/free-order OPT is $0.247$ for the published allocation, $0.988$ after removing empty buckets, and $0.931$ for odds. All edges lie in bucket four, so removing empty buckets quadruples expected welfare. Both bucket variants allow reordering, while odds preserves editorial order. At $\delta=0.8$, the means are $0.403$, $0.441$, and $0.916$. Thus, in mean welfare, odds trails the nonempty-bucket variant at $\delta=0.1$ but is more than twice either baseline at $\delta=0.8$, without reordering.

\subsection{Computational Cost}\label{sec:runtime}
With few contexts, a subset dynamic program computes the freely ordered optimum. After sorting, it uses $O(nm\sum_{h=0}^{\min(K,m)}\binom mh)$ arithmetic operations and is polynomial for fixed $K$. Appendix~\ref{app:dp} gives the recurrence. On 36 paired synthetic instances, exact DP-VCG can be faster for small context sets and caps. At $m=12,K=4$, median allocation/pivot times are $747/2831$ ms for DP and $24/112$ ms for odds. These timings compare exact-arithmetic implementations: DP-VCG attains the freely ordered optimum, while odds preserves editorial order. Appendix~\ref{app:paired} gives the comparison and protocol.

The rational odds prototype maximizes $\sum_e(b_i-r)o_e$ by exact integer min-cost flow at each ratio $r$, reaching optimality when the residual is zero. The polynomial guarantee rests on LP~\eqref{eq:lp}, not the iteration count. Appendix~\ref{app:experiments} reports sparsity measurements and rational upper certificates when exact OPT is unavailable.

\section{Conclusion}
Editorial order need not be sacrificed for truthful and efficient ad allocation. Click thinning prescribes actual click probabilities by omitting ads before interaction, turning implemented click welfare into a linear-fractional matching objective optimized exactly over a bid-independent range. The resulting mechanism has monotone click service and attains at least half the freely ordered optimum, answering the constant-approximation monotonicity question of~\cite{anchor}. Characterizing attainable click service sharpens the approximation factor to $2-(1-\delta)^{K-1}$ for ad cap $K\ge1$ and probabilities bounded by $0<\delta<1$. This is the optimal worst-case factor against the freely ordered optimum among all order-preserving rules, with no additional approximation loss from truthfulness or computation. Appendix~\ref{sec:calibration} bounds the effects of probability errors within the independent, omission-invariant model. Learning click probabilities, maximizing revenue, and adapting the answer to selected ads remain directions for future work in this setting.

\bibliographystyle{ACM-Reference-Format}
\bibliography{reference}

\appendix
\setlength{\textfloatsep}{10pt}
\section{Exact Computation and Sampling}\label{app:recovery}\label{app:certificate}
For finite nonnegative rational weights $w$, set $\boldsymbol a=(b_iw_e)_{e=(i,j)\in E}$ and $\boldsymbol w=(w_e)_{e\in E}$, and let $P$ be the convex hull of feasible incidence vectors. As in Lemma~\ref{lem:lfm}, at the optimal ratio $r$, the linear objective $(\boldsymbol a-r\boldsymbol w)\cdot\boldsymbol x$ has maximum $r$ over $P$, and its integral maximizers are exactly the ratio-optimal allocations.

Visit coordinates in public order. Tentatively fix each to zero, retain this choice if the restricted linear optimum remains $r$, and otherwise fix it to one. Coordinate restrictions are faces of the $0$--$1$ polytope, so the surviving optimal face always contains an integral vertex. After at most $|E|$ queries, the remaining incidence vector is the lexicographically least optimum. The queries are constrained weighted-matching optimizations for capped matchings and exact LPs for the integral-polyhedral case of Corollary~\ref{cor:family}. Their number and rational encoding lengths are polynomial.

\noindent\textit{Running example (Optimizer certificate).} At $w=p$, bids $(4,3)$ and $r=59/27$, the coefficients $(b_i-r)p_{ij}$ for $(Ax,Ay,Bx,By)$ are $(196/135,49/135,22/45,11/15)$. The diagonal sums to $r$, the off-diagonal to $23/27$, and every singleton coefficient is at most $r$. With the empty matching at zero, this certifies exact optimality over all feasible matchings.

\subsection{Rational Encoding and Exact Sampling}\label{app:sampling}
Retention probabilities and charges have polynomial rational bit length. For rational inputs, both basic weights and finite odds have polynomial bit length. Each service and retention probability uses sums of at most $K$ such weights and a fixed number of rational products and quotients. A common denominator has bit length at most the sum of the constituent denominator lengths. Exact LP optimization and matching recovery likewise give polynomial-bit pivot values, so the expected payments $t_i$ have polynomial bit length. Dividing by a positive $q_i$ preserves this bound even when its numerical value is small, and charges satisfy $c_i\le b_i$.

To sample a rational coin $u/d$, draw $\lceil\log_2d\rceil$ fair bits, reject integers at least $d$, and retain the ad exactly when the accepted integer is below $u$. Endpoint coins are deterministic. Acceptance probability exceeds $1/2$, so expected bit use and time are polynomial and termination is almost sure. A deterministic bit bound is unavailable for arbitrary rational coins. An ideal rational-coin primitive instead uses $k$ draws. This cost excludes user clicks.

\section{Comparison with Related Constructions}\label{app:boundary}
\begin{table}[t]
\centering
\begin{tabular*}{\columnwidth}{@{\extracolsep{\fill}}lcccc@{}}
\toprule
Subset $S$ & $\Pr(S)$ & Click 1 & Click 2 & No click\\
\midrule
$\varnothing$ & $1/6$ & $0$ & $0$ & $1$\\
$\{1\}$ & $1/6$ & $1/2$ & $0$ & $1/2$\\
$\{2\}$ & $1/3$ & $0$ & $1/2$ & $1/2$\\
$\{1,2\}$ & $1/3$ & $1/2$ & $1/4$ & $1/4$\\
\midrule
Unconditional & $1$ & $1/4$ & $1/4$ & $1/2$\\
\bottomrule
\end{tabular*}
\caption{Two-ad thinning (Section~\ref{sec:clickservice}). Click and no-click probabilities are conditional on $S$ except in the last row. Unequal retention ($1/2$, $2/3$) yields equal click service.}\label{tab:physical}
\end{table}

In Table~\ref{tab:physical}, showing both ads halves the second ad's click probability to $1/4$, although the matching is feasible. Random-order contention resolution~\cite{macrury} handles conflicts under vertex-batched activation. Here a fixed capped matching already ensures feasibility, and thinning compensates for cascade reach.

Farina and Gatti's related finite-product bound occurs in the $z=K$ case~\cite[Lemma~32]{farina}. In their context-independent ad model, cloning $(i,x)$ and $(i,y)$ as distinct ads allows both to be selected, violating the advertiser constraint. Merging them loses their different click probabilities. A valid reduction must preserve contextual probabilities and one strategic advertiser per original bidder. The reset model~\cite{externalities} restores attention through empty slots. By contrast, two half-probability ads separated by an empty context still have total first-click probability $1/2+(1-1/2)/2=3/4$, rather than one.

\section{Mechanism Extensions}\label{app:extensions}
\subsection{Public Weights and Welfare Monotonicity}\label{app:weights}\label{app:nested}
\begin{theorem}[Finite public weights]\label{thm:weights}
Fix finite nonnegative rational weights satisfying
\begin{equation}\label{eq:weights}
p_e\le w_e,\qquad 0\le(1-p_e)w_e\le p_e,\qquad e\in E.
\end{equation}
For each matching in editorial order, write $p_\ell=p_{e_\ell}$, $w_\ell=w_{e_\ell}$ and $W_\ell=\sum_{h\ge\ell}w_h$. Retain $e_\ell$ with probability $a_\ell=w_\ell/[p_\ell(1+W_\ell)]$ for $p_\ell>0$ and zero otherwise. Replace $p$ by $w$ in the ratio objective and payments. All conclusions of Theorem~\ref{thm:main} continue to hold.
\end{theorem}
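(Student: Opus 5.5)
The plan is to rerun the proof of Theorem~\ref{thm:main} with $p$ replaced by $w$. Only two ingredients there used $w=p$ specifically: the service identity of Lemma~\ref{lem:service}, and the welfare comparison, which needs the excess computed with $p_e$ rather than $w_e$. Everything else is generic to exact optimization over a bid-independent public range: the pivot argument, the charge bound~\eqref{eq:bounds}, monotonicity, and the computation via Lemma~\ref{lem:lfm}, which already allows arbitrary finite nonnegative rational weights. So those steps transfer verbatim once feasibility and the service formula are in place.

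\textbf{Service identity.} I will first check that the retention probabilities are valid, that is, $0\le a_\ell\le1$. Nonnegativity is clear. For the upper bound, $w_\ell\le p_\ell(1+W_\ell)$ follows from $(1-p_\ell)w_\ell\le p_\ell$ together with $p_\ell W_{\ell+1}\ge0$, since
\[
w_\ell\le p_\ell+p_\ell w_\ell\le p_\ell(1+w_\ell+W_{\ell+1}).
\]
If $p_\ell=0$, the constraint forces $w_\ell=0$, so setting $a_\ell=0$ is consistent. Next I telescope exactly as in Lemma~\ref{lem:service}:
\[
1-a_hp_h=\frac{1+W_{h+1}}{1+W_h},
\]
which gives $q_\ell=w_\ell/(1+W_1)$. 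Alternatively, I can invoke Theorem~\ref{thm:invariant}, since the constraint $(1-p_e)w_e\le p_e$ is exactly its order-invariance condition, and Lemma~\ref{lem:target} then supplies the independent retention. Feasibility follows because every realization is a subset of the matching.

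\textbf{Incentives and computation.} The range $\{D_M^w\}$ is fixed by public data, so the incentive proof, the bounds $0\le t_i\le b_iq_i$, realized IR, and the click-only charges carry over unchanged. Polynomial time follows from Lemma~\ref{lem:lfm} with weights $w$, and the sampling argument of Appendix~\ref{app:sampling} applies because $w$ is rational.

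\textbf{Welfare bound (main obstacle).} Put $r=\max_M R^w_v(M)$, take $M_0$ optimal for~\eqref{eq:benchmark}, and let $T=\{e\in M_0: v_i>r\}$. Ratio optimality over the feasible set $T$ gives $\sum_T w_e(v_i-r)\le r$. The pointwise inequality~\eqref{eq:threshold} bounds $\OPT$ by $r$ plus the excess $\sum_T p_e(v_i-r)$, which is weighted by $p_e$ rather than $w_e$. This is exactly where the hypothesis $p_e\le w_e$ is needed: it gives
\[
\sum_T p_e(v_i-r)\le\sum_T w_e(v_i-r)\le r,
\]
hence $\OPT\le 2r$. The service identity then shows that $r$ is the realized welfare. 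The degenerate case $r=0$ is handled as before: every singleton has $v_iw_e=0$, and $w_e\ge p_e$ then forces $v_ip_e=0$, so $\OPT=0$.
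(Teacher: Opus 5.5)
Your proposal is correct and follows the paper's proof essentially step for step. Both arguments validate $a_\ell$ via $(1-p_\ell)w_\ell\le p_\ell$ (which forces $w_\ell=0$ when $p_\ell=0$), telescope $1-a_hp_h=(1+W_{h+1})/(1+W_h)$, note that the incentive and computation arguments are unchanged, and use $p\le w$ to bound the threshold excess $\sum_T p_e(v_i-r)$ by $\sum_T w_e(v_i-r)\le r$.
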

\begin{proof}
If $p_e=0$, the inequalities force $w_e=0$. For positive $p_\ell$, the second inequality gives $w_\ell\le p_\ell(1+W_\ell)$, so retention is valid. The prefix factors telescope as $(1+W_{h+1})/(1+W_h)$, including zero-weight edges, yielding $q_e(M)=w_e/(1+\sum_{f\in M}w_f)$ for $e\in M$. The exact computation and incentive arguments are unchanged. For the comparator subset $T$ defined by $v_i>r$, ratio optimality gives $\sum_Tw_e(v_i-r)\le r$. Since $p\le w$, the expected threshold excess in~\eqref{eq:threshold} is no larger. Hence $\OPT\le2r$.
\end{proof}

\begin{lemma}[Monotonicity in public weights]\label{lem:weight-monotonicity}
For nonnegative bids and a nonempty downward-closed family $\mathcal F\subseteq2^E$ on finite $E$, the maximum of $R_{b,w}(M)=\sum_{e=(i,j)\in M}b_iw_e/(1+\sum_{e\in M}w_e)$ over $M\in\mathcal F$ is coordinatewise nondecreasing in finite nonnegative weights $w$.
\end{lemma}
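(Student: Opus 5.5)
It suffices to raise a single coordinate, say the weight of one edge $g$ from $w_g$ to $w_g'>w_g$ with all other weights fixed. The general statement then follows by changing coordinates one at a time. Let $r=\max_{M\in\mathcal F}R_{b,w}(M)$ and let $r'$ be the maximum under $w'$. The goal is $r'\ge r$. I would use the standard linear reformulation of the ratio, as in Lemma~\ref{lem:lfm}: for any $M$ and any $s\ge 0$,
\[
R_{b,w}(M)\ge s \iff \sum_{e=(i,j)\in M}(b_i-s)w_e\ge s.
\]
This holds because the denominator $1+\sum_{e\in M}w_e$ is positive.

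Fix a maximizer $M$ of $R_{b,w}$ over $\mathcal F$. The plan splits into two cases according to the sign of $b_{i(g)}-r$, where $i(g)$ is the advertiser on edge $g$.

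In the first case, $g\notin M$ or $b_{i(g)}\ge r$. The quantity $\sum_{e\in M}(b_i-r)w_e$ is then unchanged or does not decrease when $w_g$ is raised. It was at least $r$, so it stays at least $r$. Hence $R_{b,w'}(M)\ge r$ and therefore $r'\ge r$.

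In the second case, $g\in M$ and $b_{i(g)}<r$. I would use downward closure to pass to $M\setminus\{g\}\in\mathcal F$ and show this set already has ratio at least $r$ under $w$. Removing $g$ drops the term $(b_{i(g)}-r)w_g\le 0$ from the sum, so $\sum_{e\in M\setminus\{g\}}(b_i-r)w_e\ge r$. This gives $R_{b,w}(M\setminus\{g\})\ge r$. The set $M\setminus\{g\}$ does not involve $w_g$, so its ratio under $w'$ is the same, and again $r'\ge r$. If $M=\{g\}$, the removal yields the empty set, whose ratio is $0$. The same inequality then forces $r\le 0$, and since bids are nonnegative, $r'\ge 0\ge r$ holds trivially. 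The nonemptiness of $\mathcal F$ together with downward closure guarantees that $\varnothing\in\mathcal F$.

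The only delicate step is the second case. This is where downward closure is essential: without it, raising the weight of a below-threshold edge could lower the ratio of every feasible set. The rest is bookkeeping with the linearized inequality. I would also remark that nonnegativity of bids is used only to ensure $r\ge0$, so that the equivalence above is applied with $s\ge0$ and the empty set serves as a valid baseline.
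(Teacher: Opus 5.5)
Your proof is correct and follows the paper's argument: linearize at the optimal ratio $r$, use downward closure to discard edges with $b_i<r$, and note that raising the weights of edges with $b_i\ge r$ cannot decrease $\sum_{e}(b_i-r)w_e$. The only difference is organizational: the paper prunes zero-weight edges and shows that deleting a below-threshold edge would strictly raise the ratio, so the optimizer itself has all $b_i\ge r$ and every weight can be raised at once, whereas you delete the offending edge directly and raise one coordinate at a time.
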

\begin{proof}
Remove zero-weight edges from an optimum $M$ of ratio $r$. Set $D=1+\sum_{e\in M}w_e$. Downward closure and optimality give
\[
R_{b,w}(M\setminus\{e\})-r
=\frac{rD-b_iw_e}{D-w_e}-r
=\frac{w_e(r-b_i)}{D-w_e}\le0.
\]
Since $D-w_e\ge1$ and $w_e>0$, deleting any edge with $b_i<r$ would strictly improve the ratio. Hence $b_i\ge r$. For $w'\ge w$,
\[
\sum_{e\in M}(b_i-r)w'_e\ge\sum_{e\in M}(b_i-r)w_e=r,
\]
hence $R_{b,w'}(M)\ge r$. For empty $M$, nonnegativity suffices.
\end{proof}

The coordinatewise maximum of finitely many admissible tables remains admissible and weakly improves their optimized welfare.

\subsection{Augmenting the Public Range}\label{app:augmentation}
Exact fixed-order optimization with pivots truthfully attains $F$, and enumeration is polynomial for fixed $K$. Its complexity for general contexts is left open here. For $m$ pairwise disjoint edges in editorial order, let $V(j,k)$ be the optimal suffix welfare conditional on reaching edge $j$, with $k$ slots left. The recurrence $V(j,k)=\max\{V(j+1,k),p_jv_j+(1-p_j)V(j+1,k-1)\}$ chooses whether to show edge $j$. With $V(m+1,k)=V(j,0)=0$, backward induction gives $V(1,K)=F$ in $O(mK)$ arithmetic operations, covering the disjoint tight examples.

Thinning admits polynomial-time optimization for arbitrary $K$, and a public list of unthinned renderings can enlarge its range. Fix feasible matchings $M_1,\ldots,M_L$ before bids and render each unthinned in editorial order. In $M_\ell$, edge $e_r=(i,j)$ has public service $s_i^\ell=p_{e_r}\prod_{h<r}(1-p_{e_h})$, with zero service for unmatched advertisers. The augmented optimum is
\[
r^{\mathrm{aug}}(b)=\max\Bigl\{r^{\mathrm{odds}}(b),\max_{1\le\ell\le L}\sum_{i=1}^{n}b_i s_i^\ell\Bigr\}.
\]
Exact optimization and pivots over this bid-independent union preserve truthfulness in expectation and realized individual rationality~\cite{midr}. Welfare weakly increases, retaining every odds approximation guarantee. Each allocation or pivot call adds $O(LK)$ arithmetic operations after preprocessing, so a polynomial-time construction of a polynomial-sized list preserves polynomial complexity. For two disjoint half-probability edges with unit values, adding the full matching raises welfare from $2/3$ to $3/4$. Section~\ref{sec:rangecomparison} evaluates this augmentation with a single $M$ maximizing $\sum_{e\in M}p_e$.

\subsection{Downward-Closed Feasibility Constraints}\label{app:feasibility}
The service construction and welfare comparison extend beyond matching under downward closure. The computational guarantee additionally requires exact optimization over the feasibility family.

\begingroup
\emergencystretch=1em
\begin{corollary}[Feasibility-family reduction]\label{cor:family}
Replace capped matchings by any finite downward-closed family of edge sets containing at most one edge per advertiser. Assume the same cascade with a fixed public order and an exact oracle optimizing the fractional objective over this family. The mechanism is truthful in expectation, realized individually rational, and a factor-two approximation. A polynomial-size integral polyhedral description with polynomial-time exact LP optimization suffices for polynomial parameter computation and recovery of an integral optimizer.
\end{corollary}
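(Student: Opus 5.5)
We follow the structure of Theorem~\ref{thm:main} and Theorem~\ref{thm:weights}, replacing each use of the matching structure by downward closure and the exact oracle. The proof has four parts: click service, incentives, welfare, and computation.

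\textbf{Click service and feasibility.} Fix any set $M$ in the family, with its edges in the public order. Lemma~\ref{lem:service} uses only two facts: the cascade over an ordered list of edges, and independence of the retention coins from the click trials. So the suffix retention rule~\eqref{eq:retention} still gives $q_e(M)=p_e/(1+\sum_{f\in M}p_f)$. The requirement of at most one edge per advertiser makes $q_i(M)$ well defined, so reported welfare is again the ratio $R_b(M)$ in~\eqref{eq:ratio}. Every realized display is a subset of $M$, and downward closure makes it feasible.

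\textbf{Incentives and payments.} The range of lotteries $\{D_M\}$ is fixed by public data before bidding. The VCG-style argument in the proof of Theorem~\ref{thm:main} uses only exact maximization over this range. That argument gives truthfulness in expectation, the bounds~\eqref{eq:bounds}, click-contingent charges with $0\le c_i\le b_i$, and realized individual rationality, all verbatim. Proposition~\ref{prop:monotone} likewise still gives monotone click service.

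\textbf{Factor two.} Take $\OPT$ over the family. For a fixed set, decreasing-value order is still optimal by the adjacent-swap argument, so the representation~\eqref{eq:benchmark} holds with matchings replaced by family members. Let $M_0$ attain that maximum and put $T=\{e=(i,j)\in M_0: v_i>r\}$. The key step is that $T$ lies in the family; this uses downward closure and nothing about matchings. Then $R_v(T)\le r$ gives~\eqref{eq:excess}. The pointwise threshold inequality~\eqref{eq:threshold} then yields $\OPT\le 2r$. The case $r=0$ needs singletons, which belong to the family by downward closure if they lie in some member.

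\textbf{Computation.} The oracle hypothesis, exact optimization of the fractional objective over the family, lets us compute $r$, $M^*$, and each pivot $H_i$. The pivot $H_i$ uses the same oracle with $b_i=0$; at most one pivot per served advertiser is needed, since served advertisers have distinct edges. For the polyhedral sufficiency claim, let $P$ be an integral polytope whose vertices are the family's incidence vectors. We repeat the convex-decomposition argument of Lemma~\ref{lem:lfm}: the ratio over $P$ equals a $\mu$-weighted average of ratios of integral members, so some vertex attains the maximum. The Charnes--Cooper transformation then turns the problem into an LP. We homogenize the description $A\boldsymbol{x}\le\boldsymbol{c}$ as $A\boldsymbol{y}\le\boldsymbol{c}z$ and add $z+\sum_e w_ey_e=1$, which keeps the LP polynomial size. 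The main point to check is $z>0$. This follows because $P$ is bounded, being contained in $[0,1]^E$, so $z\ge 1/(1+\sum_e w_e)$. Finally, we recover an integral optimizer by the edge-fixing procedure of Appendix~\ref{app:recovery}. Each step optimizes $(\boldsymbol a-r\boldsymbol w)\cdot\boldsymbol x$ over a coordinate face of $P$, which is again an integral polytope of polynomial size. I expect the only real obstacle to be stating the LP homogenization cleanly; everything else transfers verbatim.
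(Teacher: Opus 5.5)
Your proposal is correct and follows the paper's proof essentially step for step. It applies Lemma~\ref{lem:service} to each family member, gets incentives and realized individual rationality from exact range optimization with pivots, uses downward closure to keep the high-value subset $T$ feasible in the factor-two threshold comparison, and handles computation through Lemma~\ref{lem:lfm} and the edge-fixing recovery of Appendix~\ref{app:recovery}. Your homogenized Charnes--Cooper LP spells out what the paper leaves implicit. In it, $z>0$ comes from boundedness of $P$: its recession cone is $\{0\}$, so $z=0$ would force $\boldsymbol{y}=0$ and contradict $z+\sum_e w_ey_e=1$. Your bound $z\ge1/(1+\sum_e w_e)$ then follows.
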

\endgroup

The feasibility family sets the oracle's domain, while the service formula and pivot construction remain the same. Downward closure keeps both rendered and comparator subsets feasible.

\begin{proof}
Lemma~\ref{lem:service} applies to each feasible set: its telescoping identity uses only ordered probabilities. One edge per advertiser gives scalar service, so exact range optimization and pivots preserve truthfulness and realized individual rationality. The high-value benchmark subset remains feasible, giving factor two by the proof of Theorem~\ref{thm:main}. For an integral polytope, Lemma~\ref{lem:lfm} gives exact fractional optimization, and Appendix~\ref{app:recovery} recovers an integral optimizer.
\end{proof}

Integrality ensures that the relaxation's optimum is attained by a feasible set, whereas an arbitrary relaxation may overestimate every feasible ratio. Without downward closure, omission can also break feasibility: under an exactly-$K$ requirement, even for $K=1$, basic thinning omits a certain-click singleton with probability $1/2$, producing an infeasible empty rendering.

\section{The Running Example}\label{app:running}

At truthful bids, the basic range has values
\[
\begin{array}{c|c@{\qquad}c|c}
M&R_v(M)&M&R_v(M)\\\hline
\varnothing&0&\{Ax\}&16/9\\
\{Ay\}&2/3&\{Bx\}&9/8\\
\{By\}&27/19&\{Ax,By\}&59/27\\
\{Ay,Bx\}&13/9&&
\end{array}
\]
The diagonal is optimal. In order $Ax,By$, suffix sums $17/10,9/10$ give retention probabilities $10/27,10/19$ and services $q_A=(4/5)/(1+17/10)=8/27$, $q_B=(9/10)/(1+17/10)=1/3$. In particular, $B$ is reached with probability $19/27$ and then clicks with probability $(10/19)(9/10)=9/19$. Welfare is $59/27$, about $58.4\%$ of the freely ordered optimum $187/50$.

Zeroing $A$'s bid gives $H_A=27/19$ at $By$. The selected diagonal gives $B$ reported welfare $3\cdot(1/3)=1$, so $t_A=8/19$. Similarly $H_B=16/9$ and $A$'s selected welfare $32/27$ give $t_B=16/27$. Click charges are $c_A=t_A/q_A=27/19<4$ and $c_B=t_B/q_B=16/9<3$. Only the clicked advertiser pays, with expected payments $t_A$ and $t_B$. Truthful utilities are $u_A=392/513$ and $u_B=11/27$ in expectation.

\begin{table}[t]
\centering
\begin{tabular}{lccc}
\toprule
$A$'s report $x$ & Selected matching & $q_A$ & $u_A$ at $v_A=4$\\
\midrule
$0\le x<27/19$ & $\{By\}$ & $0$ & $0$\\
$27/19<x<27/4$ & $\{Ax,By\}$ & $8/27$ & $392/513$\\
$x>27/4$ & $\{Ax\}$ & $4/9$ & $61/171$\\
\bottomrule
\end{tabular}
\caption{Basic mechanism as $A$'s report $x$ varies ($b_B=3$, $v_A=4$). Breakpoint ties use the public tie rule.}\label{tab:thresholds}
\end{table}

In Table~\ref{tab:thresholds}, each matching's reported welfare is a line in $x$ with slope $q_A$. The upper envelope consists of $27/19$, $8x/27+1$, and $4x/9$, intersecting at $27/19$ and $27/4$. The other four lines lie below it for $x\ge0$. The truthful report lies in the diagonal interval, where utility is constant across reports. In the final interval, utility is $16/9-H_A=61/171$. Thus a sufficiently high report raises service from $8/27$ to $4/9$ but increases the expected payment from $8/19$ to $H_A=27/19$, reducing true utility from $392/513$ to $61/171$.

\section{Experimental Details and Additional Results}\label{app:experiments}

All welfare comparisons use exact rational selection, prices, and cascade welfare, without Monte Carlo sampling of clicks. Values are independent uniform draws from $\{1/4,1,4,16\}$ and the editorial order follows increasing context index. Separable probabilities are $p_{ij}=\delta A_iC_j/100$, with independent uniform $A_i,C_j\in\{1,\ldots,10\}$. Contextual probabilities are $\delta B_{ij}/100$, with independent uniform $B_{ij}\in\{1,\ldots,100\}$. Bottleneck instances use contextual probabilities, allow all edges to the first half of contexts, and allow each remaining edge independently with probability $0.15$. The largest sampled probability may be below $\delta$.

For the 144 small inputs of Section~\ref{sec:rangecomparison}, we enumerate all feasible matchings, using decreasing-value order for free-order OPT and editorial order for fixed-order OPT. Public augmentation adds an unthinned maximum-$\sum p_e$ matching, selected with the public tie rule, and computes pivots over the union. Exact tight-family checks use $K\in\{2,3,4\}$, $\delta\in\{0.1,0.5,0.8\}$, and $\epsilon\in\{\delta/10,\delta/100\}$. In both orientations, odds welfare is one and augmentation attains fixed-order OPT.

For $m=K=4$, Algorithm~2 of the arXiv version of~\cite{anchor} uniformly samples one of four buckets. Buckets $\ell=1,2,3$ cover $(2^{-\ell},2^{-(\ell-1)}]$, and bucket four covers probabilities at most $1/(2m)=1/8$. Within bucket $\ell$, greedy matching selects up to $\min\{2^\ell,m\}$ edges in decreasing $v_ip_{ij}$ order and renders them in that order. We average the four buckets' exact welfare, including empty outputs. The nonempty-bucket variant averages the same conditional allocations over publicly nonempty buckets, independently of bids.

\subsection{Implementation and Welfare Certificates}
Timings use a sequential Python prototype with an exact integer network-simplex solver and numerical dual proposals, measured after warm-up on one desktop CPU. Starting at $r=0$, the rational prototype maximizes $\sum_e(b_i-r)o_e$ and updates $r$ to the selected ratio until the exact residual is zero. Positive residuals strictly increase the ratio, ensuring termination over the finite matching family. Integer scaling preserves the objective, and binary edge penalties implement lexicographic ties. The polynomial guarantee rests on LP~\eqref{eq:lp}.

The separate sparsity suite has 60 contextual instances: $n\in\{16,64,128\}$, $m=8$, $K\in\{4,8\}$, $\delta=0.8$, independent edge density in $\{0.25,1\}$, and five repetitions per combination. Across the six $(n,\text{density})$ groups, minimum certified $W/U$ (with $U$ defined below) ranges from $0.722$ to $0.938$. At $n=128$, median allocation times are $7.9$ ms at density $0.25$ and $22.3$ ms at density one. Allocation uses at most three oracle calls, and at most eight bidders require pivots.

Over the capped matching polytope $P$, we use the last-choice relaxation~\cite{downward}
\[
U_{\mathrm{LP}}=\max\Bigl\{\sum_{e=(i,j)}v_i y_e:\boldsymbol{x}\in P,\ 0\le y_e\le p_ex_e,\ \sum_e y_e\le1\Bigr\}.
\]
Every rendered allocation induces a feasible point, so $\OPT\le U_{\mathrm{LP}}$. Nonnegative rational dual variables satisfying
\[
\alpha_i+\beta_j+\gamma\ge p_e\lambda_e,\qquad\lambda_e+\eta\ge v_i\quad(e=(i,j)\in E)
\]
give $U=\sum_i\alpha_i+\sum_j\beta_j+K\gamma+\eta\ge U_{\mathrm{LP}}$, making $W/U$ conservative. We convert numerical row, column, capacity, and total-click multipliers to nonnegative rationals, set $\lambda_e=(v_i-\eta)_+$, and raise $\gamma$ to the maximum of its proposal and the deficits $p_e\lambda_e-\alpha_i-\beta_j$. All inequalities are checked exactly. The best welfare among odds, its unthinned matching, the best singleton, and standalone-weight matching gives a feasible lower bound.

\subsection{Runtime versus Free-Order DP-VCG}\label{app:paired}
Table~\ref{tab:dp} compares odds with free-order DP-VCG (Appendix~\ref{app:dp}) on 36 dense contextual instances: $n\in\{32,128\}$, $m\in\{4,8,12\}$, and $K\in\{2,4\}$, with three repetitions per combination, two at $\delta=0.05$ and one at $\delta=0.8$. Both methods receive identical inputs, use exact rational arithmetic, and compute pivots only for positive-service winners. Allocation time includes matching recovery, and pivot time covers all pivot computations. The timing order alternates across instances.

\begin{table}[t]

\centering\small
\begin{tabular*}{\columnwidth}{@{\extracolsep{\fill}}ccrrr@{}}
\toprule
Contexts $m$ & Cap $K$ & Odds A/P & DP-VCG A/P & Min. $W/O$\\
\midrule
4 & 2 & 6.1/14.2 & 4.6/8.9 & 0.924\\
4 & 4 & 6.7/30.5 & 9.0/35.8 & 0.914\\
8 & 2 & 13.6/25.3 & 16.6/33.3 & 0.918\\
8 & 4 & 14.6/62.4 & 128.1/518.0 & 0.937\\
12 & 2 & 18.1/43.6 & 41.2/74.4 & 0.924\\
12 & 4 & 24.2/112.1 & 747.2/2830.6 & 0.930\\
\bottomrule
\end{tabular*}
\caption{Median prototype times in ms (A/P: allocation/all pivots), six paired instances per row ($n\in\{32,128\}$). Min. is the smallest odds welfare/free-order OPT in the row.}\label{tab:dp}
\end{table}

Across the 24 low-probability inputs, mean $W/O$ is $0.958$ and its minimum is $0.938$. Every instance satisfies bound~\eqref{eq:sharp}.

\subsection{Welfare across Instance Families}\label{app:structure}
Tables~\ref{tab:ranges} and~\ref{tab:methods} pool eight inputs per structure, totaling 24 per $(\delta,K)$ group. Min. denotes the smallest $W/O$. For $K\in\{2,4\}$, structure-specific mean $W/O$ ranges are $0.892$--$0.981$ for separable, $0.881$--$0.952$ for contextual, and $0.891$--$0.957$ for bottleneck inputs. Corresponding mean $F/O$ ranges are $0.959$--$1$, $0.956$--$0.988$, and $0.986$--$1$. For standalone-weight matching at $\delta=0.8,K=4$, mean welfare ratios are $0.872$, $0.280$, and $0.775$, respectively. For contextual inputs, odds attains $0.952$ versus fixed-order OPT's $0.964$: the heuristic loss far exceeds the order loss.

Across these 96 instances, median $W/F$ is $0.951$, with an interquartile range of $0.909$--$0.982$ (linearly interpolated empirical quartiles). For the single-matching augmentation, the paired gain $(W_{\mathrm{aug}}-W)/F$ has mean $0.0165$, maximum $0.1203$, and median zero.

Retention is total expected displays divided by total selected ads over 48 instances per $\delta$ with $K\in\{2,4\}$. These include 5 and 22 unthinned singleton selections at $\delta=0.1$ and $0.8$, respectively.

\section{Dynamic Program for Free-Order OPT}\label{app:dp}
Sort advertisers by $v_1\ge\cdots\ge v_n$, breaking ties by public identifiers. Let $G(i,U)$ be the optimal welfare from advertisers $i,\ldots,n$ using contexts outside $U$ and $K-|U|$ remaining slots, given that no earlier selected ad was clicked. Independence preserves the remaining click probabilities. With $J_i(U)=\{j\in\{1,\ldots,m\}\setminus U:(i,j)\in E\}$, for $i\le n$ and $|U|<K$ the recurrence is
\begin{equation}\label{eq:dp}
\begin{aligned}
G(i,U)=\max\Bigl\{&G(i+1,U),\\
&\max_{j\in J_i(U)}\bigl[v_ip_{ij}
 +(1-p_{ij})G(i+1,U\cup\{j\})\bigr]\Bigr\}.
\end{aligned}
\end{equation}
The maximum over an empty $J_i(U)$ is omitted. Set $G(n+1,U)=0$ and $G(i,U)=0$ for $|U|=K$. Either advertiser $i$ is omitted, or it occupies some available context $j$ and precedes every later selected advertiser in the value order. In the latter case it contributes $v_ip_{ij}$, and the optimal suffix is reached with probability $1-p_{ij}$. Backward induction on $i$ proves the recurrence and that $G(1,\varnothing)=\OPT(v)$. Following maximizing transitions recovers a matching and its rendering order.

Sorting and evaluating the states take
\[
O\!\left(n\log n+nm\sum_{h=0}^{\min(K,m)}\binom mh\right)
\]
arithmetic operations. This is polynomial for fixed $K$ and fixed-parameter tractable in $m$. The subset-state dependence remains exponential when both $m$ and $K$ grow. Each state is the welfare of one matching of at most $K$ edges, so exact arithmetic adds polynomial input-bit-length cost per operation. Memoization and exact rational comparisons implement the recurrence.

Pivots over the full public range of matchings and orders give the free-order VCG baseline, whose value-based rendering need not preserve editorial order. For each of at most $K$ positive-service winners, zero its bid, re-sort, and rerun the DP. Divide the pivot by its actual click service to obtain the click-contingent charge. Other bidders pay zero.

\section{Robustness to Calibration Error}\label{sec:calibration}
Let $\widehat p$ be the auction's probabilities and $p$ the actual independent, omission-invariant click probabilities, with the same editorial order and allowed edges. Calibration error separates the service used for pricing from actual service.

\begin{proposition}[Bounded calibration error]\label{prop:robust}
Suppose values and allowed reports lie in $[0,V]$, and $\max_e|p_e-\widehat p_e|\le\varepsilon$. Run either the basic or the odds mechanism using $\widehat p$, and set $\Delta=\min\{1,K\varepsilon\}$. Every bidder's expected gain from any unilateral deviation is at most $2V\Delta$. At truthful reports,
\[
W_p\ge\frac{\OPT_p-3V\Delta}{2}.
\]
Truthful realized individual rationality continues to hold.
\end{proposition}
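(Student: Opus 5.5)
The plan is to reduce everything to one coupling estimate and then push it through the incentive, individual-rationality and welfare arguments of Theorems~\ref{thm:main} and~\ref{thm:complete}.

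\textbf{Coupling lemma.} Fix any entry $z$ of the range built from $\widehat p$. Its retention probabilities are public numbers computed from $\widehat p$. Under both $p$ and $\widehat p$, use the same retention coins. Realize each click trial $Z_e$ as $\mathbf 1\{U_e<p_e\}$ versus $\mathbf 1\{U_e<\widehat p_e\}$ with a shared uniform $U_e$. A rendered set has at most $K$ edges. Outside the event that some rendered trial differs, the identity of the clicked advertiser (or no click) is the same in both worlds. That event has probability at most $\min\{1,K\varepsilon\}=\Delta$ by a union bound. Hence the distributions of the clicked advertiser under $p$ and $\widehat p$ differ in total variation by at most $\Delta$. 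Write $q_i$ and $\widehat q_i$ for actual and predicted services. This gives $|q_i-\widehat q_i|\le\Delta$, and more generally $|\sum_i x_i(q_i-\widehat q_i)|\le V\Delta$ for any coefficients $x_i\in[0,V]$. The same coupling applied to unthinned renderings gives $\OPT_{\widehat p}\ge\OPT_p-V\Delta$: take a matching attaining $\OPT_p$ in decreasing-value order and evaluate it under $\widehat p$. This holds for both the basic and odds ranges. Certain-click singletons with $\widehat p_e=1$ are single-trial instances, and the $E_1$ partition is itself determined by $\widehat p$ and is public.

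\textbf{Incentives and individual rationality.} Payments are click-contingent with charge $c_i\in[0,b_i]\subseteq[0,V]$, computed entirely from $\widehat p$ by~\eqref{eq:bounds}. The actual expected utility of type $v_i$ is therefore $(v_i-c_i)q_i$, while the predicted utility is $(v_i-c_i)\widehat q_i$. Since $|v_i-c_i|\le V$, these differ by at most $V\Delta$ under any report. Truthful reporting maximizes predicted utility by the proof of Theorem~\ref{thm:main}, which used only that the range is public. Losing at most $V\Delta$ at the truthful report and gaining at most $V\Delta$ at the deviation bounds any gain by $2V\Delta$. Realized individual rationality is pointwise: at $b_i=v_i$ we have $c_i\le v_i$, so a click yields $v_i-c_i\ge0$ and no click yields $0$, whatever the true $p$.

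\textbf{Welfare.} The predicted welfare satisfies $\widehat W\ge\OPT_{\widehat p}/2$ by Theorem~\ref{thm:main} or~\ref{thm:complete} applied to $\widehat p$. The coupling lemma with $x_i=v_i$ gives $W_p\ge\widehat W-V\Delta$. Combining with $\OPT_{\widehat p}\ge\OPT_p-V\Delta$ yields
\[
W_p\ge\tfrac12(\OPT_p-V\Delta)-V\Delta=\tfrac12(\OPT_p-3V\Delta).
\]

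\textbf{Main obstacle.} The delicate step is the coupling lemma. One must ensure the retention lottery is held fixed across both worlds even though it was designed for $\widehat p$. The bound must also depend only on the number of rendered trials, not on the reach structure. The shared-uniform coupling handles both: retention is public and independent of clicks, and the first differing trial is the only way the outcomes can diverge.
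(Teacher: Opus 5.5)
Your proposal is correct and follows essentially the same argument as the paper: a coupling with a shared retention lottery and shared uniforms for the click trials, and a union bound over at most $K$ rendered trials. From it you derive the same $V\Delta$ utility perturbation via $|v_i-c_i|\le V$ (hence the $2V\Delta$ deviation bound), pointwise individual rationality from $c_i\le v_i$, and the same welfare chain $W_p\ge W_{\widehat p}-V\Delta\ge \OPT_{\widehat p}/2-V\Delta\ge \OPT_p/2-3V\Delta/2$. The only difference is cosmetic: you prove only the one-sided bound $\OPT_{\widehat p}\ge\OPT_p-V\Delta$, which is all the chain needs, whereas the paper states the two-sided version.
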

\begin{proof}
At fixed reports, couple the two models using the same retention lottery and independent uniforms for click trials. At most $K$ trials matter. Each coupled trial differs with probability $|p_e-\widehat p_e|$, so a union bound limits the probability of different first clicks to $\Delta$. Welfare in $[0,V]$ changes in expectation by at most $V\Delta$. A bidder's utility is $(v_i-c_i)$ on its click and zero otherwise. Since $0\le c_i\le b_i\le V$, $|v_i-c_i|\le V$ gives the same error bound. Write $U_p(b_i)$ for expected utility at report $b_i$, with other reports fixed. Truthfulness under $\widehat p$ therefore gives
\[
U_p(b_i)-U_p(v_i)\le U_{\widehat p}(b_i)-U_{\widehat p}(v_i)+2V\Delta\le2V\Delta.
\]

The same welfare comparison applies to each freely ordered deterministic allocation, hence $|\OPT_p-\OPT_{\widehat p}|\le V\Delta$. Therefore $W_p\ge W_{\widehat p}-V\Delta\ge\OPT_{\widehat p}/2-V\Delta\ge\OPT_p/2-3V\Delta/2$. At a truthful report the pointwise charge bound $c_i\le v_i$ holds regardless of the actual click probabilities.
\end{proof}

\noindent\textit{Example (Miscalibrated probabilities).} For odds, take two disjoint edges in editorial order $A,B$, with $K=2$, estimates $\widehat p_A=\widehat p_B=1/2$, actual probabilities $p_A=p_B=4/5$, $b_A=1$ and $v_B=3/2$. The truthful report selects both ads, with retention $(2/3,1)$ and click price $c_B=1/2$. $B$'s actual click service is $28/75$, and since $v_B-c_B=1$, its expected utility is also $28/75$. A report of three selects $B$ alone, with retention one, click price one, and actual utility $(4/5)(3/2-1)=2/5$. The deviation gain is $2/5-28/75=2/75$.

\end{document}